\newcommand\E{\ensuremath{\mathbb{E}}}
\newcommand\R{\ensuremath{\mathbb{R}}}
\newcommand\F{\mathcal{F}}
\newcommand\half{\frac{1}{2}}
\newcommand{\indic}[1]{1\hspace{-2.1mm}{1}_{\{#1\}}} %Indicator Function
\def\R{{\mathbb R}}
\def\P{{\mathbb P}}
\def\1{{\mathbf 1}}
\def\F{{\mathcal F}}
\def\L{{\mathcal L}}
\def\setT{{\mathcal T}}
\newtheorem{theorem}{Theorem}
\newtheorem{lemma}[theorem]{Lemma}
\newtheorem{remark}[theorem]{Remark}
\newenvironment{proof}[1][Proof]{\noindent\textbf{#1.} }{\ \rule{0.5em}{0.5em}}
\numberwithin{equation}{section}
\numberwithin{theorem}{section}
\begin{document}
\title{Optimal Starting-Stopping and Switching of a CIR Process\\ with Fixed Costs}
\author{Tim Leung\thanks{IEOR Department, Columbia University, New York, NY 10027; \,\mbox{tl2497@columbia.edu}. Corresponding author. } \and Xin Li\thanks{IEOR Department, Columbia University, New York, NY 10027; \,\mbox{xl2206@columbia.edu}.} \and Zheng Wang\thanks{IEOR Department, Columbia University, New York, NY 10027; \,\mbox{zw2192@columbia.edu}.}}
\date{\today}
\maketitle
\begin{abstract}
This paper analyzes the problem of starting and stopping a Cox-Ingersoll-Ross (CIR) process with fixed   costs. In addition, we  also study a related optimal switching problem that involves  an infinite sequence of starts and stops. We establish the conditions under which the starting-stopping and switching problems admit the same optimal starting and/or stopping strategies.  We rigorously prove that the optimal starting and stopping strategies are of threshold type, and give the analytical expressions for the   value functions in terms of confluent hypergeometric functions. Numerical examples are provided to illustrate the dependence of timing strategies on  model parameters and transaction costs. 
\end{abstract}\vspace{10pt}

\begin{small}
\noindent {\textbf{Keywords:}\, optimal starting-stopping, optimal switching, CIR process, confluent hypergeometric functions }\\
{\noindent {\textbf{JEL Classification:}\,  C41, G11, G12 }}\\
{\noindent {\textbf{Mathematics Subject Classification (2010):}\, 60G40,  91G10,  62L15 }}\\
\end{small}

\begin{small}
\tableofcontents
\end{small}

\onehalfspacing

\section{Introduction}
Many problems in  risk analysis and  finance involve a sequence of starting and stopping decisions. For instance,  a trader may seek the optimal times  to  buy and then  sell an asset, or a firm may seek to invest in a project and subsequently sell it to another firm.  This motivates  us to study an optimal starting-stopping (or entry-exit) timing problem.

In general, the optimal timing    depends crucially  on the dynamics of the underlying process. In particular,  if  the process is  a super/sub-martingale, such as the geometric Brownian motion, then it is optimal either to  stop  immediately or wait forever (see e.g. \cite{shiryaev2008thou}). As such, the corresponding optimal single stopping, starting-stopping, or switching problems will admit a trivial solution.  In this paper, our optimal starting-stopping and switching problems are driven by the Cox-Ingersoll-Ross (CIR) process (see \cite{CIR85}). The CIR process has been widely used as the model for interest rates,  volatility, and energy prices (see, for example,  \cite{CIR85},  \cite{heston93}, \cite{CIRCommodity}, respectively).   In the real option literature, mean reverting processes have  been used to model the value of a project. For instance, \cite{ewald2010irreversible} studies the timing of an   irreversible investment whose value is  modeled by a CIR process, and they  solve the associated   optimal single stopping problem. \cite{carmona2007investment} examine the optimal investment timing  where the  interest rate evolves according to  a CIR process. Other  mean-reverting models, such as the exponential  Ornstein-Uhlenbeck (OU) model, have been discussed in  \cite{dixit1994investment}.  In contrast to these studies, our model addresses multiple timing decisions.

 Our main contribution is the  analytical derivation of    the non-trivial optimal starting and stopping timing strategies and the associated value functions.  Under both starting-stopping and switching approaches, it is optimal to stop when the process value is sufficiently high, though at different levels.  As for starting timing,  we find the necessary and sufficient conditions whereby it is optimal not to start at all when facing the optimal switching problem. In this case,  the optimal switching problem in fact reduces  into an optimal single stopping problem, where  the optimal stopping level is identical to that of  the  optimal starting-stopping problem.  

In the literature,  \cite{menaldi1996optimal} study an optimal starting-stopping problem for general Markov processes, and provide the mathematical characterization of the value functions.   \cite{sun1992nested} considers an optimal starting-stopping problem driven by a time-homogeneous diffusion, and analyzes the associated nested variational inequalities (VIs).  \cite{zervos2011buy} discusses a solution methodology  involving coupled variational inequalities   for an optimal switching problem under some time-homogeneous diffusions subject to  fixed transaction costs. \cite{tanaka1993} studies an optimal starting-stopping problem for two-parameter stochastic processes under discrete time.

The VI approach is commonly used  not only for optimal starting-stopping and switching problems, but also for optimal stopping problems (see \cite{Bensoussan} and \cite{Oksendal2003}, among others).   The  VI method is most useful when the  reward and value  functions are simple explicit functions with amenable properties. In our starting-stopping problem, however, the reward function for the starting problem is expressed in terms of the confluent hypergeometric function of the first kind and can be positive or negative. This motivates us to adopt a probabilistic approach to rigorously   construct the value functions.    In contrast to the VI approach,  we solve  the optimal starting-stopping problem by characterizing the  value functions as the decreasing smallest \textit{concave majorant} of the corresponding reward functions. This allows us to directly derive  the value function and prove its optimality, without \emph{a priori} conjecturing the resulting regions for starting/stopping/continuation. This methodology has been discussed  in  \cite{dayanik2003optimal}, dating back to \cite{dynkin1969theorems}.     Having solved the optimal starting-stopping problem, we  then apply our results   to infer a similar solution structure for the optimal switching problem and verify using the variational inequalities. 

As for related applications of optimal starting-stopping problems,   \cite{LeungLi2014OU} study the optimal timing to trade a mean-reverting price spread with a stop-loss constraint. \cite{LeungLiWang2014XOU} investigate the problem of buying and selling  an exponential-OU underlying subject to transaction costs. \cite{KS2007} analyze  the starting-stopping times for a risk process for an insurance firm.  In the context of   derivatives trading,  \cite{LeungLiu} study the  optimal timing to liquidate credit derivatives  where the default intensity is modeled by an OU or CIR  process. They focus on the finite-horizon trading problem, and   identify the conditions under which immediate stopping or perpetual holding is optimal. 

The rest of the paper is structured as follows. In Section \ref{sect-overview}, we formulate both the optimal starting-stopping and optimal switching problems. Then, we present our analytical results and numerical examples  in Section \ref{sect-solution}. The   proofs of our main  results are detailed in Section \ref{sect-method}. Finally, the Appendix contains the proofs for   a number of lemmas.

\section{Problem Overview}\label{sect-overview}
Denote the probability space by $(\Omega, \F, \P)$, where $\P$ is   the historical probability measure. We consider a CIR process $(Y_t)_{t \geq 0}$ that satisfies the SDE
\begin{align}\label{YCIR}
dY_{t}=  \mu( \theta - Y_t)\,dt+\sigma \sqrt{Y_t} \,dB_{t},
\end{align}
with constants $\mu, \theta, \sigma>0,$ and $B$ is a standard Brownian motion defined on the probability space. If  $2\mu\theta \ge   \sigma^2$ holds, which is often referred to as the Feller condition (see \cite{feller1951two}), then the level $0$ is inaccessible by $Y$. If  the initial value $Y_0>0$, then $Y$ stays strictly positive at all times almost surely. Nevertheless,  if $Y_0 =0$, then $Y$ will enter  the interior of the state space immediately and stays positive thereafter almost surely.   If $2\mu\theta <  \sigma^2$, then the level $0$  is a reflecting boundary. This means that once $Y$ reaches $0$, it  immediately  returns to the interior of the state space and continues to evolve. For a detailed categorization of boundaries for diffusion processes, we refer to  Chapter 2 of \cite{borodin2002handbook} and Chapter 15 of \cite{karlin1981second}.

% Another  process considered herein is the  CIR process $Y$ that satisfies
%\begin{align}\label{YCIR}
%dY_{t}=  \mu( \theta - Y_t)\,dt+\sigma \sqrt{Y_t} \,dB_{t},
%\end{align}
%with constants $\mu, \theta, \sigma>0$. If  $2\mu\theta \ge   \sigma^2$ holds, which is often referred to as the Feller condition (see \cite{feller1951two}), then the level $0$ is inaccessible by $Y$. If  the initial value $Y_0>0$, then $Y$ stays strictly positive at all times almost surely. Nevertheless,  if $Y_0 =0$, then $Y$ will enter  the interior of the state space immediately and stays positive thereafter almost surely.   If $2\mu\theta <  \sigma^2$, then the level $0$  is a reflecting boundary. This means that once $Y$ reaches $0$, it  immediately  returns to the interior of the state space and continues to evolve. For a detailed categorization of boundaries for diffusion processes, we refer to  Chapter 2 of \cite{borodin2002handbook} and Chapter 15 of \cite{karlin1981second}.

\subsection{Optimal Starting-Stopping Problem}
Given a CIR process, we first consider the optimal timing to stop.  If a decision to stop is made at some time $\tau$, then the amount  $Y_\tau$ is received and simultaneously the constant transaction cost $c_s > 0$ has to be paid. Denote by $\mathbb{F}$ the filtration generated by $B$, and $\setT$    the set of all $\mathbb{F}$-stopping times.  The maximum expected discounted value is obtained by solving the optimal stopping problem
\begin{align}\label{VCIR} 
V(y) &= \sup_{\tau \in \setT}\E_y\!\left\{e^{-r \tau}(Y_{\tau} - c_s)\right\}, 
\end{align}
where   $r>0$ is the constant discount rate, and  $\E_y\{\cdot\}\equiv\E\{\cdot|Y_0=y\}$.

The value function $V$ represents the expected   
value from optimally stopping the process $Y$. On the other hand, the process value plus the
transaction cost constitute  the total cost to start. Before even starting, one needs to choose
the optimal timing to start, or not to start at
all. This leads us to analyze the  starting timing inherent in the starting-stopping 
problem. Precisely, we solve \begin{align}\label{JCIR}J(y) &=  \sup_{\nu \in \setT}\E_y\!\left\{e^{-r \nu} ( V(Y_{\nu})  - Y_{\nu} - c_b)\right\}, \end{align} 
with the constant transaction cost $ c_b> 0$ incurred  at the start.  In other words, the objective is
to maximize the expected difference between the value function
$V(Y_{\nu})$ and the current  $Y_{\nu}$, minus transaction cost  $c_b$. The value function $J(y)$
represents the maximum expected value that can be gained by entering and subsequently exiting, with transaction costs $c_b$ and $c_s$ incurred, respectively, on entry and exit. For  our analysis, the transaction costs $c_b$ and $c_s$ can be different. To facilitate presentation, we denote the functions
\begin{align}\label{hCIR}
h_s(y) = y-c_s, \quad \text{ and } \quad  h_b(y)=y+c_b.
\end{align}

If it turns out that $J(Y_0)\le  0$  for some initial value
$Y_0$, then it is optimal not to start at all. Therefore,  it is important to identify the trivial cases. Under the CIR model, since   $\sup_{y\in\R_+}( V(y) -h_b(y))\leq 0$
implies that   $J(y) = 0$ for $y \in \R_+$, we shall therefore focus on the  case with
\begin{align}\label{generalassume} \sup_{y\in\R_+}( V(y) -h_b(y))> 0,
\end{align} and solve for the non-trivial optimal  timing strategy.

\subsection{Optimal Switching Problem}\label{ISF}
Under the optimal switching approach,  it is assumed that an infinite number of entry and exit actions take place. The sequential entry and exit times are modeled  by the stopping times  $\nu_1,\tau_1,\nu_2,\tau_2,\dots \in \setT$ such that
\begin{align*}
0\leq \nu_1 \leq \tau_1 \leq \nu_2 \leq \tau_2 \leq \dots.
\end{align*}
Entry and exit decisions are made, respectively, at times  $\nu_i$ and   $\tau_i$, $i\in\mathbb{N}$. The  optimal timing to enter or exit would depend  on the  initial position. Precisely, under the CIR model,  if the initial position is zero, then the first task is to determine when to \emph{start} and  the corresponding optimal switching problem is
\begin{align}\label{JCIRsw}
{\tilde{J}}(y) &= \sup_{\Lambda_0}\E_y\left\{\sum_{n=1}^\infty [e^{-r\tau_n}h_s(Y_{\tau_n}) - e^{-r \nu_n} h_b(Y_{\nu_n})]\right\},
\end{align}
with   the set of admissible stopping times $\Lambda_0=(\nu_1,\tau_1,\nu_2,\tau_2,\dots)$, and  the reward functions $h_b$ and $h_s$ defined in \eqref{hCIR}.  On the other hand, if we start with a long position, then it is necessary to solve
\begin{align}\label{VCIRsw}
{\tilde{V}}(y) &= \sup_{\Lambda_1}\E_y\left\{e^{-r\tau_1}h_s(Y_{\tau_1}) + \sum_{n=2}^\infty [e^{-r\tau_n}h_s(Y_{\tau_n}) - e^{-r \nu_n} h_b(Y_{\nu_n})] \right\},
\end{align}
with  $\Lambda_1=(\tau_1,\nu_2,\tau_2,\nu_3,\dots)$ to determine when to \emph{stop}.
%Under CIR price dynamics,  the optimal switching  problems are similarly defined as
%\begin{align}
%{\tilde{J}}(y) &= \sup_{\Lambda_0}\E_y\left\{\sum_{n=1}^\infty [e^{-r\tau_n}h_s(Y_{\tau_n}) - e^{-r \nu_n} h_b(Y_{\nu_n})]\right\},\label{JCIRsw}\\
%{\tilde{V}}(y) &= \sup_{\Lambda_1}\E_y\left\{e^{-r\tau_1}h_s(Y_{\tau_1}) + \sum_{n=2}^\infty [e^{-r\tau_n}h_s(Y_{\tau_n}) - e^{-r \nu_n} h_b(Y_{\nu_n})] \right\},\label{VCIRsw}
%\end{align}where the reward functions $h_b$ and $h_s$ and are defined in \eqref{hCIR}.

%\begin{remark}
%Under the XOU model, denote  the infinitesimal generator of the OU process $X$ in \eqref{XOU} by
%\begin{align}\label{genX}\L = \frac{\sigma^2}{2}\frac{d^2}{d x^2} + \mu(\theta - x)\frac{d}{d x}.
%\end{align}
%$\tilde{V}$ and $\tilde{J}$ can then be solved from the pair of variational inequalities:
%\begin{align}
%\min\{r\tilde{J}(x)-\L \tilde{J}(x), \tilde{J}(x) - (\tilde{V}(x) -h_b(x))\}&=0,\label{VIJsw}\\
%\min\{r\tilde{V}(x)-\L \tilde{V}(x), \tilde{V}(x) - (\tilde{J}(x)+h_s(x))\}&=0.\label{VIVsw}
%\end{align}
%In the CIR case, ${\tilde{J}}(y)$ and ${\tilde{V}}(y)$ are solved using the same approach.
%\end{remark}

 In summary, the optimal  starting-stopping and switching problems differ in the number of entry and exit decisions.  Observe that any strategy for the starting-stopping problem \eqref{VCIR}-\eqref{JCIR} is also a candidate strategy for the switching problem \eqref{JCIRsw}-\eqref{VCIRsw}. Therefore, it follows that  $V(y) \leq {\tilde{V}}(y)$ and  $J(y) \leq {\tilde{J}}(y).$ Our objective is to derive and compare the corresponding optimal timing strategies under these two approaches.  

\section{Summary of Analytical Results}\label{sect-solution}
We first summarize our analytical results and illustrate the optimal starting and stopping strategies. The  method of solutions and their proofs will be discussed in Section \ref{sect-method}.
We consider the optimal starting-stopping problem followed by the optimal switching problem. First, we denote the infinitesimal generator of $Y$ as
\begin{align}\label{LCIR}
\L &= \frac{\sigma^2y}{2}\frac{d^2}{d y^2} + \mu(\theta - y)\frac{d}{d y},
\end{align}
and consider the ordinary differential equation (ODE)
\begin{align}\label{ODE}
\L u(y)=ru(y), \quad \textrm{ for } \, y \in \R_+.
\end{align}
To present the solutions of this ODE, we define the functions
\begin{align}\label{FGCIR}
F(y) := M(\frac{r}{\mu},\frac{2\mu\theta}{\sigma^2}; \frac{2\mu y}{\sigma^2}),\quad \text{and} \quad
G(y) := U(\frac{r}{\mu},\frac{2\mu\theta}{\sigma^2}; \frac{2\mu y}{\sigma^2}),
\end{align}
where
\begin{align*}
M(a,b;z) &= \sum_{n=0}^{\infty} \frac{a_n z^n}{b_n n!}, \qquad a_0=1,~ a_n= a(a+1)(a+2)\cdots(a+n-1),\\
U(a,b;z) &= \frac{\Gamma(1-b)}{\Gamma(a-b+1)}M(a,b;z)+\frac{\Gamma(b-1)}{\Gamma(a)}z^{1-b}M(a-b+1,2-b;z)
\end{align*}
are the confluent hypergeometric functions of first and second kind, also called the Kummer's function and Tricomi's function, respectively (see Chapter 13 of \cite{abramowitz1965handbook} and Chapter 9 of \cite{lebedev1972special}). As is well known (see   \cite{going2003survey}),   $F$ and $G$ are strictly positive and, respectively, the strictly increasing and decreasing continuously differentiable solutions of the ODE \eqref{ODE}. Also, we remark that the discounted processes $(e^{-rt}F(Y_t))_{t\ge 0}$ and $(e^{-rt}G(Y_t))_{t\ge 0}$ are martingales. 

In addition, recall the reward functions defined in \eqref{hCIR} and note that 
\begin{align}\label{CIRLrbs}
(\L - r)h_b(y)\begin{cases}
>0 &\, \textrm{ if }\, y<y_b,\\
<0 &\, \textrm{ if }\, y>y_b,
\end{cases}
\quad \textrm{and}\quad
(\L - r)h_s(y)\begin{cases}
>0 &\, \textrm{ if }\, y<y_s,\\
<0 &\, \textrm{ if }\, y>y_s,
\end{cases}
\end{align}
where the critical constants $y_b$ and $y_s$ are defined by
\begin{align}\label{ybys}
y_b := \frac{\mu\theta-rc_b}{\mu+r} \quad \textrm{and} \quad y_s := \frac{\mu\theta+rc_s}{\mu+r}.
\end{align}
Note that $y_b$ and $y_s$ depend on the parameters $\mu,$ $\theta$ and $r$, as well as $c_b$ and $c_s$ respectively, but not $\sigma.$

\subsection{Optimal Starting-Stopping Problem}\label{sect-CIR-exit}
%We consider the optimal stopping problem
%\begin{align}V^{\chi}(y) = \sup_{\tau \in \setT}\E_y\!\left\{e^{-r \tau}h^{\chi}_s(Y_{\tau}) \right\} = \sup_{\tau \in \setT}\E_y\!\left\{e^{-r \tau}(Y_{\tau} - c_s) \right\}, \label{VCIR1}\end{align}
%where $Y$ follows \eqref{YCIR}.
We now present the results for the optimal starting-stopping problem \eqref{VCIR}-\eqref{JCIR}. As it turns out, the value function $V$ is expressed in terms of $F$, and $J$ in terms of $V$ and $G$. The functions $F$ and $G$ also play a role in determining the optimal starting and stopping thresholds.

%First, we  give a bound on the value function $V^{\chi}$ in terms of $F^{\chi}(y)$.
%
%\begin{lemma}\label{lm:finiteVCIR} There exists a positive constant $K^{\chi}$ such that, for all $y \geq 0$, $0 \leq V^{\chi}(y) \leq K^{\chi}F^{\chi}(y)$.\end{lemma}

\begin{theorem}\label{thm:optLiquCIR}
The value function for the optimal
stopping  problem \eqref{VCIR} is given by
\begin{align}V(y) =
\begin{cases}
\frac{b^*-c_s}{F(b^*)}F(y) &\, \textrm{ if }\, y\in [0,b^*),\\
y-c_s &\, \textrm{ if }\, y\in [b^*, +\infty).
\end{cases}\label{VCIRsol}
\end{align}
Here,  the optimal stopping level $b^*\in (c_s\vee y_s, \infty)$ is found from the equation
\begin{align}\label{eq:solvebCIR}
F(b) = (b-c_s){F}^\prime\!(b).
\end{align}
\end{theorem}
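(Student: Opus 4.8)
The plan is to follow the probabilistic, concave-majorant route announced in the introduction, rather than positing a continuation region and then verifying. \emph{Step 1 (qualitative facts).} First I would record the behaviour of the solutions $F,G$ from \eqref{FGCIR} and of $h_s$ from \eqref{hCIR}: $F$ is strictly positive, strictly increasing and $C^2$ with $F(0{+})\in(0,\infty)$ and, by the classical asymptotics of Kummer's function, grows faster than linearly at $\infty$, so $h_s(y)/F(y)\to0$; $G$ is strictly positive, strictly decreasing and $C^2$ with $G(\infty)=0$, so $h_s(y)/G(y)\to+\infty$; and $h_s(0)=-c_s<0$. Using that $(e^{-rt}F(Y_t))$ and $(e^{-rt}G(Y_t))$ are martingales, together with It\^o's formula and optional sampling, one gets that the supremum in \eqref{VCIR} is finite and $0\le V<\infty$, the lower bound coming from $\tau\equiv+\infty$.

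\emph{Step 2 (reduction to a concave majorant).} Following \cite{dayanik2003optimal}, introduce the strictly increasing map $\Phi:=F/G$ and the transformed reward $H:=(h_s/G)\circ\Phi^{-1}$ on $\Phi(\R_+)$; then $V(y)=G(y)\,W(\Phi(y))$, where $W$ is the smallest nonnegative concave majorant of $H$. The delicate ingredient, which I expect to be the main obstacle, is the treatment of the boundary $0$: when $2\mu\theta\ge\sigma^2$ it is an entrance boundary and when $2\mu\theta<\sigma^2$ it is instantaneously reflecting, so one must argue that in both regimes $0$ is non-absorbing, that the continuation region is an interval $[0,b^*)$ abutting $0$ (clear at least on $[0,c_s)$, where $V\ge0>h_s$), and that on that interval the only admissible solution of $(\L-r)u=0$ is a multiple of $F$ — i.e.\ the $G$-component is excluded. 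This is where the boundary classification in the references of the excerpt is invoked.

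\emph{Step 3 (shape of the majorant and the gluing equation).} From \eqref{CIRLrbs}, $(\L-r)h_s(y)=\mu\theta+rc_s-(\mu+r)y$ is positive on $(0,y_s)$ and negative on $(y_s,\infty)$; carried through $\Phi$ this means $H$ is convex on $\Phi((0,y_s))$ and concave on $\Phi((y_s,\infty))$. Combined with $H<0$ near the left end and $H\uparrow+\infty$ at the right end, the smallest nonnegative concave majorant is a straight line on an initial segment followed by $H$, the junction being $C^1$ (smooth fit). By Step~2 that line is $\beta\Phi$ with no constant term, so $V=\beta F$ on $[0,b^*)$ and $V=h_s$ on $[b^*,\infty)$; continuity and smooth fit at $b^*$ give $\beta F(b^*)=b^*-c_s$ and $\beta F'(b^*)=1$, whence $\beta=(b^*-c_s)/F(b^*)$ and \eqref{eq:solvebCIR}.

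\emph{Step 4 (existence, uniqueness and location of $b^*$).} Set $\rho(y):=F(y)-(y-c_s)F'(y)$, so \eqref{eq:solvebCIR} reads $\rho(b)=0$ and $(h_s/F)'=\rho/F^2$. Green's identity for $\L$ with $(\L-r)F=0$ and $h_s''=0$ yields
\begin{align*}
\tfrac{\sigma^2y}{2}\,\rho'(y)+\mu(\theta-y)\,\rho(y)=F(y)\,(\L-r)h_s(y).
\end{align*}
On $(0,y_s)$ the right side is positive, so multiplying by the integrating factor $y^{2\mu\theta/\sigma^2}e^{-2\mu y/\sigma^2}$, which vanishes at $0$, and using $\rho(0{+})=F(0{+})+c_sF'(0{+})>0$ shows $\rho>0$ on $(0,y_s]$; since also $\rho(y)\ge F(y)>0$ on $(0,c_s]$, every zero of $\rho$ lies in $(c_s\vee y_s,\infty)$. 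On $(y_s,\infty)$ the right side is negative, so at any zero $y_1$ of $\rho$ we get $\rho'(y_1)<0$: $\rho$ crosses $0$ only downward and so has at most one zero. It has at least one since $h_s/F$ is positive on $(c_s,\infty)$ yet tends to $0$, forcing $(h_s/F)'<0$, i.e.\ $\rho<0$, somewhere. Hence $b^*$ exists, is unique, and lies in $(c_s\vee y_s,\infty)$, which together with Step~3 gives \eqref{VCIRsol}. Finally one can corroborate \eqref{VCIRsol} independently by checking that the candidate satisfies the variational inequality $\min\{rV-\L V,\,V-h_s\}=0$ and applying a verification lemma with $\tau_{b^*}=\inf\{t\ge0:Y_t\ge b^*\}$; this sidesteps the boundary subtlety of Step~2, since on $[0,b^*)$ the candidate is a multiple of $F$ and hence $e^{-r(t\wedge\tau_{b^*})}V(Y_{t\wedge\tau_{b^*}})$ is a stopped martingale.
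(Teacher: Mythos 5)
Your overall strategy is the same as the paper's (a Dayanik--Karatzas transformation reducing \eqref{VCIR} to a concave-majorant problem, then reading off the threshold from smooth fit), and your Step~4 is a correct, self-contained existence/uniqueness argument for $b^*$ via the Wronskian-type quantity $\rho=F-(y-c_s)F'$, which the paper instead extracts from the monotonicity/convexity properties of the transformed reward. However, there is a genuine gap at precisely the point where the paper's proof has to do its real work: the characterization in your Step~2 is wrong in the regime $2\mu\theta<\sigma^2$, where $0$ is a reflecting boundary. There $G(0)<\infty$, so $\Phi(0)=F(0)/G(0)>0$ and the transformed reward satisfies $H(\Phi(0))=-c_s/G(0)<0$. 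The smallest \emph{nonnegative} concave majorant of $H$ on $[\Phi(0),\infty)$ is then the line through the point $(\Phi(0),0)$ tangent to $H$, followed by $H$ itself --- not a line through the origin. Transformed back, that is $V=mF-m\,\Phi(0)\,G$ on the continuation region, with a nonzero $G$-component and a tangency point strictly below $b^*$; it encodes the suboptimal strategy ``stop at $0$ or at the upper threshold, crediting value $0$ at $0$,'' whereas at a reflecting boundary the process is not stopped there and in fact $V(0)=\frac{b^*-c_s}{F(b^*)}>0$. Your Step~3 then asserts that ``the line is $\beta\Phi$ with no constant term'' by appeal to Step~2, but Step~2 only flags the boundary issue; the usual reason for discarding the $G$-component (blow-up at the left boundary) is unavailable here precisely because $G$ is bounded at a reflecting boundary. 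The paper closes this gap by working with $\phi=-G/F$ and $h_s/F$, and proving via the supermartingale property of $e^{-rt}V(Y_t)$ applied to single upper-threshold stopping times that $V/F$ is \emph{decreasing}; the value function is then the \emph{decreasing} smallest concave majorant, and the decreasing constraint is exactly what forces the linear piece to be a constant in the $V/F$ picture, i.e.\ $V=\beta F$ with no $G$-component.

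That said, your closing remark --- verifying the candidate \eqref{VCIRsol} directly through the variational inequality $\min\{rV-\L V,\,V-h_s\}=0$, using smooth fit at $b^*$, convexity of $F$ (which gives $\beta F\ge h_s$ on $[0,b^*)$), the inequality $(\L-r)h_s\le 0$ on $[b^*,\infty)$ since $b^*>y_s$, and the stopped-martingale property of $e^{-rt}F(Y_t)$ on the continuation region --- is a legitimate independent route that does not rely on the faulty majorant characterization, so the proof is repairable along the lines you sketch. To make that route complete you would still need the standard localization/integrability step (linear growth of the candidate together with $\E_y\{\sup_{t}e^{-rt}Y_t\}<\infty$) to pass from bounded stopping times to arbitrary $\tau\in\setT$.
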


%\subsubsection{Optimal Entry Timing}\label{sect-CIR-entry}
%We now turn to  the optimal entry timing problem under the CIR model and
%\begin{align}J(y) = \sup_{\nu \in \setT
%}\E_y\!\left\{e^{-r \nu} ( V(Y_{\nu})  - h_b( Y_{\nu})\right\} =  \sup_{\nu \in \setT
%}\E_y\!\left\{e^{-r \nu} ( V(Y_{\nu})  - (Y_{\nu} + c_b))\right\}.
%\label{JCIR1}\end{align}

Therefore, it is optimal to stop as soon as the process $Y$ reaches $b^*$ from below. The stopping level $b^*$ must also be higher than the fixed cost  $c_s$ as well as the critical level $y_s$ defined in \eqref{ybys}.

Now we turn to the optimal starting problem. Define the reward function 
\begin{align}\label{hhat}
\hat{h}(y) :=V(y) - (y + c_b).
\end{align}
Since $F$, and thus $V$, are convex, so is $\hat{h}$, we also observe  that the reward function $\hat{h}(y)$  is decreasing in $y$. To exclude the scenario where it is optimal never to start, the condition stated in \eqref{generalassume}, namely,  $\sup_{y\in \R+} \hat{h}(y)>0$,  is now equivalent to  
\begin{align}\label{condJ}
V(0) = \frac{b^*-c_s}{F(b^*)}>c_b,
\end{align}
since $F(0)=1$.

\begin{theorem}\label{thm:optEntryCIR}
The optimal starting problem \eqref{JCIR} admits the solution
\begin{align}\label{JCIRsol}
  J(y) =
\begin{cases}
  V(y)-(y+c_b) &\, \textrm{ if }\, y \in [0,d^*],\\
  \frac{V(d^*)-(d^*+c_b)}{G(d^*)}G(y) &\, \textrm{ if }\, y \in (d^*, +\infty).
\end{cases}
\end{align}  The optimal starting level $d^* >0$ is uniquely determined from
\begin{align}
\label{eq:solvedCIR}
G(d)({V}^\prime\!(d)  - 1) = G^{'}\!(d)(V(d)-(d+c_b)).
\end{align}
\end{theorem}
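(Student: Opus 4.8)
The plan is to use the smallest‑concave‑majorant characterization of optimal stopping, following \cite{dayanik2003optimal} (going back to \cite{dynkin1969theorems}). Since $F$ and $G$ are the increasing and decreasing solutions of \eqref{ODE} and $(e^{-rt}G(Y_t))_{t\ge 0}$ is a martingale, the strictly increasing transformation $\Phi:=F/G$ reduces \eqref{JCIR} to $J(y)=G(y)\,W(\Phi(y))$, where $W$ is the smallest nonnegative concave majorant of the transformed reward $H:=(\hat h/G)\circ\Phi^{-1}$, with $\hat h=V-h_b$ as in \eqref{hhat}. The whole argument thus comes down to determining the shape of $H$. The key tool is the identity: for any sufficiently smooth $g$, the sign of $\big((g/G)\circ\Phi^{-1}\big)''(z)$ equals the sign of $(\L-r)g$ at the corresponding point (which follows from $\L G=rG$ and the fact that $F'G-FG'$ is of one sign). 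Equivalently, with $q:=G\hat h'-G'\hat h=G^2(\hat h/G)'$, the sign of $H'$ equals that of $q$, and $q$ solves the linear first‑order ODE $\tfrac{\sigma^2 y}{2}q'+\mu(\theta-y)q=G\,(\L-r)\hat h$.

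Next I would compute $(\L-r)\hat h$ from Theorem~\ref{thm:optLiquCIR}: on $[0,b^*)$, $V=\tfrac{b^*-c_s}{F(b^*)}F$ gives $(\L-r)V=0$, hence $(\L-r)\hat h=-(\L-r)h_b=(\mu+r)(y-y_b)$; on $[b^*,\infty)$, $V=h_s$ gives $(\L-r)\hat h=(\L-r)h_s-(\L-r)h_b=r(c_s+c_b)>0$. Thus $(\L-r)\hat h<0$ on $(0,y_b)$ and $>0$ on $(y_b,\infty)$, where crucially $y_b>0$: the elementary bound $V(0)\le\mu\theta/r$ (dominating $e^{-r\tau}(Y_\tau-c_s)$ via the dynamics \eqref{YCIR}) together with the standing non‑triviality assumption \eqref{condJ} forces $c_b<\mu\theta/r$, i.e.\ $y_b>0$. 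Consequently $H$ is strictly concave on $(\Phi(0),\Phi(y_b))$ and strictly convex on $(\Phi(y_b),\infty)$. To locate $d^*$, I multiply the ODE for $q$ by the integrating factor $\rho(y):=y^{2\mu\theta/\sigma^2}e^{-2\mu y/\sigma^2}>0$, obtaining $(\rho q)'=\tfrac{2}{\sigma^2}\rho(y)y^{-1}G(y)(\L-r)\hat h(y)$; hence $\rho q$ strictly decreases on $(0,y_b)$ and strictly increases on $(y_b,\infty)$. Using $\hat h(0)=V(0)-c_b>0$, the boundary asymptotics of $G$ and $G'$ at $0$, and $\hat h\equiv-(c_s+c_b)$ on $[b^*,\infty)$, one checks $\lim_{y\downarrow0}\rho(y)q(y)>0$ and $\lim_{y\to\infty}\rho(y)q(y)=0$; since $\rho q$ increases to $0$ on $(y_b,\infty)$ it is negative there, so $\rho q$ — hence $q$, hence $H'$ — changes sign exactly once, from positive to negative, at a unique $d^*\in(0,y_b)$, and $q(d^*)=0$ is precisely \eqref{eq:solvedCIR}.

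With $z^*:=\Phi(d^*)<\Phi(y_b)$, the above shows $H$ is concave and strictly increasing on $(\Phi(0),z^*]$ (with $H\ge0$ there, since $H(\Phi(0+))\ge0$) and strictly decreasing on $[z^*,\infty)$, with $H(z^*)=\hat h(d^*)/G(d^*)>0$. Therefore $W=H$ on $(\Phi(0),z^*]$; on $[z^*,\infty)$ any nonnegative concave function on a half‑line is non‑decreasing while $H\le H(z^*)$, so $W\equiv H(z^*)$, and $W$ is $C^1$ across $z^*$ because $H'(z^*)=0$. Undoing the transformation, $J=G\cdot(W\circ\Phi)$ is exactly \eqref{JCIRsol} ($J=\hat h$ on $[0,d^*]$, $J=\tfrac{\hat h(d^*)}{G(d^*)}G$ on $(d^*,\infty)$), and uniqueness of $d^*$ is inherited from uniqueness of the zero of $q$. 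By construction this $J$ is $C^1$ on $\R_+$ (the smooth‑fit relation at $d^*$ being \eqref{eq:solvedCIR}), nonnegative, bounded, dominates $\hat h$, and satisfies $(\L-r)J\le0$ a.e.\ (equal to $(\mu+r)(y-y_b)<0$ on $(0,d^*)$ since $d^*<y_b$, and to $0$ on $(d^*,\infty)$), which confirms optimality of the starting time $\nu^*=\inf\{t\ge0:Y_t\le d^*\}$.

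The main obstacle is the structural analysis of $H$, and in particular pinning down the sign of $q$ (equivalently of $H'$) near the left endpoint $y=0$: the behaviour of $G$ and $G'$ there depends on whether the Feller condition $2\mu\theta\ge\sigma^2$ holds (so $0$ is inaccessible) or fails (so $0$ is reflecting), and one must exploit the non‑triviality condition \eqref{condJ} — through $V(0)\le\mu\theta/r$ and hence $y_b>0$ — together with the controlled growth of $\hat h/G$ at both ends to guarantee that $\rho q$ has a single sign change. This is exactly what makes the starting region a nonempty interval $[0,d^*]$ and the threshold $d^*$ unique.
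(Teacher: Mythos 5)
Your proposal is correct in substance and reaches the same free-boundary equation \eqref{eq:solvedCIR} and the same value function, but it travels a genuinely different route from the paper. The paper works with the transformation $\phi=-G/F$, normalizes by $F$, and characterizes $J=F\cdot(\hat W\circ\phi)$ via the \emph{decreasing} smallest concave majorant $\hat W$ of $\hat H=(\hat h/F)\circ\phi^{-1}$ (Theorem \ref{thm:VCIR} applied to $\hat h$); the shape of $\hat H$ is then pinned down in Lemma \ref{lm:hatHCIR} by separately computing its sign, its monotonicity, and its convexity via the sign of $(\L-r)\hat h$, and the continuation piece of $\hat W$ is a line through the origin tangent to $\hat H$ at $\hat z$, whose tangency condition $\hat H(\hat z)/\hat z=\hat H'(\hat z)$ simplifies to \eqref{eq:solvedCIR}. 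You instead use the dual transformation $\Phi=F/G$ with normalization by $G$, so the continuation piece of your majorant is a horizontal line and the free-boundary condition is simply $H'(z^*)=0$, i.e.\ $q(d^*)=0$ with $q=G\hat h'-G'\hat h$ --- which \emph{is} \eqref{eq:solvedCIR} verbatim. Your integrating-factor argument for $q$ (sign of $(\rho q)'$ equals sign of $(\L-r)\hat h$, with $\rho q\to 0$ at infinity and $\rho q>0$ at $0+$) delivers existence, uniqueness, and the location $d^*\in(0,y_b)$ of the sign change of $H'$ in one stroke; this is arguably tighter than the paper's uniqueness argument, which leans on Figure \ref{fig:hatHCIR} and the limit $\hat H'(z)\to 0$ as $z\to\phi(0)$. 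Your derivation of $y_b>0$ from $V(0)\le\mu\theta/r$ together with \eqref{condJ} also differs from (and is as clean as) the paper's route via $F'(0)=r/(\mu\theta)$ and the convexity of $V$.

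One point needs shoring up. The characterization of $J/G$ as the smallest \emph{nonnegative concave} majorant of $H$ in the $\Phi=F/G$ coordinate is the standard Dayanik--Karatzas statement for natural boundaries; when $2\mu\theta<\sigma^2$ the left boundary is reflecting, $\Phi(0)=1/G(0)>0$ is an interior point of $(0,\infty)$ in the transformed scale, and the single-upper-threshold strategies (let $Y$ reflect at $0$, stop only at $b$) have value $G(y)H(\Phi(b))\Phi(y)/\Phi(b)$, a chord through the point $(0,0)$ which lies \emph{outside} the domain $[\Phi(0),\infty)$. A nonnegative concave majorant on $[\Phi(0),\infty)$ need not dominate such chords, so the correct majorant class is the smaller one of concave $W\ge H$ with $W(z)/z$ non-increasing (this is precisely what the ``decreasing'' qualifier accomplishes in the paper's coordinates, and what the decomposition at $\tau_0$ in the proof of Theorem \ref{thm:VCIR} is for). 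Your candidate does lie in this smaller class --- on $[\Phi(0),z^*]$ one has $W(z)/z=\hat h(y)/F(y)$, which is decreasing because $\hat h$ is positive and decreasing there while $F$ is positive and increasing, and on $[z^*,\infty)$ it is $H(z^*)/z$ --- so your conclusion stands, but this verification should be made explicit rather than subsumed under the remark about the asymptotics of $G$ and $G'$ at the origin.
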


As a result, it is optimal to start as soon as the CIR process $Y$ falls below the strictly  positive  level $d^*$.

\subsection{Optimal Switching Problem}\label{sect-switchingCIR}
Now we study the optimal switching problem under the CIR model in \eqref{YCIR}.
%\begin{lemma}\label{lm:finiteVCIRsw}
%For all $y\geq 0$, the value functions $\tilde{J}$ and $\tilde{V}$ satisfy the inequalities
%\begin{align*}
%0 &\leq \tilde{J}(y) \leq \frac{\mu\theta}{r},\\
%0 &\leq \tilde{V}(y) \leq y + \frac{2\mu\theta}{r}.
%\end{align*}
%\end{lemma}
We start by giving conditions under which it is optimal not to start ever.
\begin{theorem}\label{thm:CIRth2}
Under the CIR model, if it holds that
\begin{enumerate}[(i)]
\item $y_b\leq 0, \quad \textrm{  or }$
\item $y_b > 0 \quad \textrm{ and } \quad c_b \geq \frac{b^* - c_s} {F(b^*)},$
\end{enumerate}
%There exists a unique $b_0>0$ satisfying $A_sF(b) = (A_sb-c_s){F}^\prime\!(b)$.
with $b^*$ given in \eqref{eq:solvebCIR}, then the optimal switching problem \eqref{JCIRsw}-\eqref{VCIRsw} admits the solution
\begin{align}\label{CIRsolsw1_J}
\tilde{J}(y) = 0 \qquad \textrm{for} \quad  y \ge 0,
\end{align}
and
\begin{align}\label{CIRsolsw1_V}
\tilde{V}(y) =
\begin{cases}
\frac{b^* - c_s}{F(b^*)} F(y) &\, \textrm{ if }\, y\in [0,b^*),\\
y-c_s &\, \textrm{ if }\, y\in [b^*, +\infty).
\end{cases}
\end{align}
%\begin{align*}
%y_b = \frac{\mu\theta}{r+\mu} - \frac{rc_b}{A_b(r+\mu)}.
%\end{align*}
\end{theorem}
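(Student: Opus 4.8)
The plan is to verify the candidate pair $\tilde J \equiv 0$ and $\tilde V = V$, with $V$ as in \eqref{VCIRsol}, by matching lower and upper bounds. The engine of the argument is the pointwise domination $V(y)\le h_b(y)$ on $\R_+$, which is precisely what makes re-entering never worthwhile and collapses the switching problem onto the single stopping problem of Theorem~\ref{thm:optLiquCIR}. So the first step is to show that each of the hypotheses (i) and (ii) implies $\hat h(y)=V(y)-h_b(y)\le 0$ for all $y\ge0$. Since $\hat h$ is decreasing (as noted before Theorem~\ref{thm:optEntryCIR}), $\sup_{y\ge0}\hat h(y)=\hat h(0)=V(0)-c_b$, so it suffices to check $V(0)\le c_b$. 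Under (ii) this is the assumed inequality, because $V(0)=\frac{b^*-c_s}{F(b^*)}$. Under (i), observe that $y_b\le 0$ is equivalent to $c_b\ge \mu\theta/r$; on the other hand \eqref{eq:solvebCIR} gives $V(0)=\frac{b^*-c_s}{F(b^*)}=1/F'(b^*)$, and a direct computation from the series for $M$ yields $F'(0)=r/(\mu\theta)$, so by convexity of $F$ and $b^*>0$ we get $F'(b^*)\ge F'(0)=r/(\mu\theta)$, hence $V(0)\le \mu\theta/r\le c_b$. Thus $V\le h_b$ on $\R_+$ under either hypothesis.

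Next, the lower bounds: $\tilde J(y)\ge 0$ by using the admissible strategy that never acts, and $\tilde V(y)\ge V(y)$ by the comparison noted in Section~\ref{ISF} (exiting once at the optimal stopping time of \eqref{VCIR} and never re-entering is admissible for \eqref{VCIRsw} and realizes $V(y)$). For the matching upper bounds I would run a verification argument using three facts about $V$ from Theorem~\ref{thm:optLiquCIR}: $V\ge h_s$, $V\ge 0$, and $(e^{-rt}V(Y_t))_{t\ge0}$ is a supermartingale (it is $r$-excessive; concretely $V$ is $C^1$, equals $\frac{b^*-c_s}{F(b^*)}F$ on $[0,b^*)$ and $h_s$ on $[b^*,\infty)$, where $(\L-r)h_s<0$ since $b^*>y_s$). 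For an arbitrary admissible $\Lambda_1=(\tau_1,\nu_2,\tau_2,\dots)$, combine $h_s\le V$ at each $\tau_n$ with $h_b\ge V$ (Step~1) at each $\nu_n$ to get, for every $N$,
\begin{align*}
\sum_{n=1}^N e^{-r\tau_n}h_s(Y_{\tau_n}) - \sum_{n=2}^N e^{-r\nu_n}h_b(Y_{\nu_n}) \;\le\; e^{-r\tau_1}V(Y_{\tau_1}) + \sum_{n=2}^N\bigl(e^{-r\tau_n}V(Y_{\tau_n}) - e^{-r\nu_n}V(Y_{\nu_n})\bigr).
\end{align*}
Taking $\E_y$ and applying optional sampling of the supermartingale $e^{-rt}V(Y_t)$ along $0\le\tau_1$ and along each pair $\nu_n\le\tau_n$ makes every summand on the right contribute nonpositively and the leading term at most $V(y)$; letting $N\to\infty$ gives $\tilde V(y)\le V(y)$, hence $\tilde V=V$ and \eqref{CIRsolsw1_V}. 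The same two pointwise bounds applied to an admissible $\Lambda_0=(\nu_1,\tau_1,\nu_2,\dots)$ give $\E_y\bigl\{\sum_{n\ge1}[e^{-r\tau_n}h_s(Y_{\tau_n})-e^{-r\nu_n}h_b(Y_{\nu_n})]\bigr\}\le \sum_{n\ge1}\E_y[e^{-r\tau_n}V(Y_{\tau_n})-e^{-r\nu_n}V(Y_{\nu_n})]\le 0$ since $\nu_n\le\tau_n$ for all $n$, so $\tilde J(y)\le 0$, and with the lower bound $\tilde J\equiv 0$, giving \eqref{CIRsolsw1_J}. (Equivalently, one checks that $(0,V)$ solves the coupled variational inequalities of the switching problem and invokes the corresponding verification theorem; the estimates are the same.)

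The main obstacle is the rigorous handling of the infinite sequence: one must justify the limit $N\to\infty$ and the use of optional sampling for the possibly infinite stopping times $\nu_n,\tau_n$, which requires integrability and uniform integrability of the families $\{e^{-r\tau_n}V(Y_{\tau_n})\}$ and $\{e^{-r\nu_n}V(Y_{\nu_n})\}$ and control of the tail of the series so that the truncated bound survives in the limit. The linear growth of $V$ (from $h_s\le V\le h_b$) together with standard moment bounds for the CIR process is what makes this go through, and one should treat the reflecting-boundary case $2\mu\theta<\sigma^2$ with the same estimates. Everything else — Step~1 and the two pointwise inequalities — is elementary once Theorem~\ref{thm:optLiquCIR} is available.
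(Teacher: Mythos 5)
Your proof is correct, but it takes a genuinely different route from the paper's. The paper does not give a self-contained verification at all: it observes that $y_b$ and $y_s$ play the roles of the critical constants in Assumption~4 of \cite{zervos2011buy}, argues that the positivity assumption on $x_b$ and the inaccessibility of $0$ can be relaxed for the CIR process, and then imports the variational-inequality proof of Lemma~1 there wholesale (cf.\ the remark that $(\tilde J,\tilde V)=(0,V)$ satisfies \eqref{VIJ}--\eqref{VIV}). You instead reduce everything to the single pointwise inequality $V\le h_b$ on $\R_+$ and run a direct supermartingale/telescoping argument: bounding $h_s(Y_{\tau_n})\le V(Y_{\tau_n})$ and $-h_b(Y_{\nu_n})\le -V(Y_{\nu_n})$ and applying optional sampling of the nonnegative supermartingale $e^{-rt}V(Y_t)$ along $\nu_n\le\tau_n$ kills every round-trip term, while the never-act and single-exit strategies supply the matching lower bounds. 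Your derivation of $V\le h_b$ is sound in both cases: under (ii) it is the hypothesis combined with $\hat h$ decreasing and $\hat h(0)=V(0)-c_b$, and under (i) the chain $V(0)=1/{F}^\prime\!(b^*)\le 1/{F}^\prime\!(0)=\mu\theta/r\le c_b$ uses only the convexity of $F$ and ${F}^\prime\!(0)=r/(\mu\theta)$, both of which the paper itself invokes elsewhere. What your approach buys is a self-contained, probabilistic proof that makes transparent \emph{why} entry is never profitable (the reward $V-h_b$ from any entry is nonpositive, so every entry--exit cycle has nonpositive expected value) and that handles the reflecting boundary with no extra work, since only $V\ge 0$ and the supermartingale property are needed; what the paper's route buys is that the same machinery simultaneously delivers Theorem~\ref{thm:CIRth3}, where no such pointwise domination holds and a genuine coupled-VI argument seems unavoidable. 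The technical caveats you flag (justifying $N\to\infty$ and optional sampling at possibly infinite stopping times) are real but benign here: nonnegativity of $e^{-rt}V(Y_t)$ gives optional sampling by Fatou, and $\E_y\{e^{-rt}V(Y_t)\}\le e^{-rt}(\E_y\{Y_t\}+c_b)\to 0$ controls the tail; you would, however, need to state the admissibility/integrability convention under which the infinite sum in \eqref{JCIRsw}--\eqref{VCIRsw} is defined in order to pass the truncated bound to the limit, a point the paper also leaves implicit.
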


Conditions $(i)$ and $(ii)$ depend on problem data and can be easily verified. In particular, recall that $y_b$ is defined in \eqref{ybys} and is easy to compute, furthermore it is independent of $\sigma$ and $c_s$. Since it is optimal to never enter, the switching problem is equivalent to a stopping problem and the solution in Theorem \ref{thm:CIRth2} agrees with that in Theorem \ref{thm:optLiquCIR}. Next, we provide conditions under which it is optimal to enter as soon as the CIR process reaches some lower level.
\begin{theorem}\label{thm:CIRth3}
Under the CIR model, if 
\begin{align}\label{condJsw}
y_b > 0 \quad \textrm{and} \quad c_b < \frac{b^* - c_s} {F(b^*)},
\end{align}
%and %there exists some $x_F<x_b$ such that $q_{F}(x_F,x_s)=0$,
%\begin{align}\label{CIRcond}
%\frac{A_bF(y_b)-(A_bd+c_b){F}^\prime\!(y_b)}{{F}^\prime\!(y_b)G(y_b)-F(y_b){G}^\prime\!(y_b)}\leq \frac{A_sF(y_s)-(A_sb-c_s){F}^\prime\!(y_s)}{{F}^\prime\!(y_s)G(b)-F(y_s){G}^\prime\!(y_s)},
%\end{align}
%$q_{F}(y_b,y_s)\leq 0$,
with $b^*$ given in \eqref{eq:solvebCIR}, then the optimal switching problem \eqref{JCIRsw}-\eqref{VCIRsw} admits the solution
\begin{align}\label{CIRsolsw2_J}
\tilde{J}(y) = \begin{cases}
KF(y)-(y + c_b) &\, \textrm{ if }\, y\in [0,\tilde{d}^*],\\
QG(y) &\, \textrm{ if }\, y\in (\tilde{d}^*,+\infty),
\end{cases} 
\end{align}
and
\begin{align}\label{CIRsolsw2_V}
\tilde{V}(y) =
\begin{cases}
KF(y) &\, \textrm{ if }\, y\in [0,\tilde{b}^*),\\
QG(y)+ (y - c_s) &\, \textrm{ if }\, y\in [\tilde{b}^*, +\infty),
\end{cases}
\end{align}
where
\begin{align}
K=\frac{G(\tilde{d}^*)-(\tilde{d}^*+c_b){G}^\prime\!(\tilde{d}^*)}{{F}^\prime\!(\tilde{d}^*)G(\tilde{d}^*)-F(\tilde{d}^*){G}^\prime\!(\tilde{d}^*)},\\
Q=\frac{F(\tilde{d}^*)-(\tilde{d}^*+c_b){F}^\prime\!(\tilde{d}^*)}{{F}^\prime\!(\tilde{d}^*)G(\tilde{d}^*)-F(\tilde{d}^*){G}^\prime\!(\tilde{d}^*)}.
\end{align}
There exist unique optimal starting and stopping levels $\tilde{d}^*$ and $\tilde{b}^*$, which are found from the nonlinear system of equations:
\begin{align}
\frac{G(d)-(d+c_b){G}^\prime\!(d)}{{F}^\prime\!(d)G(d)-F(d){G}^\prime\!(d)}&= \frac{G(b)-(b-c_s){G}^\prime\!(b)}{{F}^\prime\!(b)G(b)-F(b){G}^\prime\!(b)}, \label{eq:solveCIRG}\\
\frac{F(d)-(d+c_b){F}^\prime\!(d)}{{F}^\prime\!(d)G(d)-F(d){G}^\prime\!(d)}&=\frac{F(b)-(b-c_s){F}^\prime\!(b)}{{F}^\prime\!(b)G(b)-F(b){G}^\prime\!(b)}. \label{eq:solveCIRF}
\end{align}
Moreover, we have that $\tilde{d}^* < y_b \, \textrm{ and } \, \tilde{b}^* > y_s.$
\end{theorem}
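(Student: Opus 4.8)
\medskip
\noindent\textbf{Proof strategy.}
The plan is to prove the theorem by a verification argument based on the pair of coupled variational inequalities that the switching value functions should satisfy,
\begin{align*}
\min\{\, r\tilde{J} - \L\tilde{J},\; \tilde{J} - (\tilde{V} - h_b)\,\} &= 0, \qquad \tilde{J} \ge 0,\\
\min\{\, r\tilde{V} - \L\tilde{V},\; \tilde{V} - (\tilde{J} + h_s)\,\} &= 0,
\end{align*}
supplemented with linear growth bounds that permit the use of Dynkin's formula and optional sampling. First I would write the candidate pair $(\tilde{J},\tilde{V})$ in the conjectured two-threshold form, using $F$ on the pieces where one waits and $G$ on the pieces reached immediately after an action (legitimate because $(\L-r)F \equiv (\L-r)G \equiv 0$). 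Requiring $\tilde{J}$ to be $C^1$ at $\tilde{d}^*$ and $\tilde{V}$ to be $C^1$ at $\tilde{b}^*$ produces four equations; two of them are linear in $(K,Q)$ and give the stated ratios with common denominator ${F}^\prime G - F{G}^\prime$, which is strictly positive because ${F}^\prime > 0 > {G}^\prime$, and eliminating $K$ and $Q$ from the other two collapses them into the nonlinear system \eqref{eq:solveCIRG}--\eqref{eq:solveCIRF} for $(\tilde{d}^*,\tilde{b}^*)$.

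\medskip
The heart of the matter is to prove that \eqref{eq:solveCIRG}--\eqref{eq:solveCIRF} has a unique solution with $0 < \tilde{d}^* < y_b$ and $\tilde{b}^* > y_s$. Since $y_b < y_s$, this automatically yields $\tilde{d}^* < \tilde{b}^*$, which is exactly what makes the waiting regions of $\tilde{J}$ and $\tilde{V}$ overlap consistently. A convenient reformulation is geometric: the four pasting conditions say that $\Phi := KF - QG$ is tangent to the line $y \mapsto y + c_b$ at $\tilde{d}^*$ and to the parallel line $y \mapsto y - c_s$ at $\tilde{b}^*$; equivalently, $g(y) := \Phi(y) - y$ satisfies $g(\tilde{d}^*) = c_b$, $g(\tilde{b}^*) = -c_s$, ${g}^\prime(\tilde{d}^*) = {g}^\prime(\tilde{b}^*) = 0$, and its curvature ${g}^{\prime\prime} = K{F}^{\prime\prime} - Q{G}^{\prime\prime}$ changes sign at most once because $F$ and $G$ are convex. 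Concretely, for each admissible upper threshold $b$ I would solve the smooth-fit relation for the matching lower threshold $d = d(b)$, show that this map is well defined and strictly monotone via an intermediate-value argument built on the Wronskian identity for $F, G$ and the sign information \eqref{CIRLrbs}, then reduce the remaining equation to a single scalar equation in $b$ and isolate its unique root, using hypothesis \eqref{condJsw} precisely to force $\tilde{b}^* > y_s$ and $0 < \tilde{d}^* < y_b$; the complementary placements correspond to conditions $(i)$ and $(ii)$ of Theorem \ref{thm:CIRth2}, under which $\tilde{J} \equiv 0$. The same monotonicity should also yield $K > 0$ and $Q > 0$.

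\medskip
With the thresholds in hand the verification is done piecewise. On $[0,\tilde{d}^*]$ one has $\tilde{J} = \tilde{V} - h_b = KF - h_b$, so $r\tilde{J} - \L\tilde{J} = (\L-r)h_b \ge 0$ by \eqref{CIRLrbs}, valid because $\tilde{d}^* \le y_b$; symmetrically, on $[\tilde{b}^*,\infty)$ one has $\tilde{V} = \tilde{J} + h_s = QG + h_s$, so $r\tilde{V} - \L\tilde{V} = -(\L-r)h_s \ge 0$, valid because $\tilde{b}^* \ge y_s$ --- this is exactly where the localization of the thresholds is indispensable. On the complementary waiting regions $r\tilde{J} - \L\tilde{J} = 0$ and $r\tilde{V} - \L\tilde{V} = 0$ hold by construction. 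It remains to check the obstacle inequalities $\tilde{J} \ge \tilde{V} - h_b$ on $(\tilde{d}^*,\infty)$, $\tilde{V} \ge \tilde{J} + h_s$ on $[0,\tilde{b}^*)$, and $\tilde{J} \ge 0$; the first two reduce to the two-sided bound $y - c_s \le \Phi(y) \le y + c_b$ on $[\tilde{d}^*,\tilde{b}^*]$, which follows because $g$ is pinned to $c_b$ and $-c_s$ with horizontal tangents at the endpoints while ${g}^{\prime\prime}$ changes sign only once (so ${g}^\prime \le 0$ on the whole interval), and $\tilde{J} \ge 0$ reduces to $Q > 0$ together with $KF \ge h_b$ on the action region. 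Finally, linear growth of $h_b, h_s$, the martingale property of $(e^{-rt}F(Y_t))_{t\ge0}$ and $(e^{-rt}G(Y_t))_{t\ge0}$, and the boundedness of $G$ provide the integrability needed to conclude, through Dynkin's formula and optional sampling, that the candidate functions dominate $\tilde{J}$ and $\tilde{V}$ and are attained by the corresponding threshold switching strategy, whence equality.

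\medskip
The step I expect to be the main obstacle is the existence, uniqueness, and correct localization of the solution of the coupled system \eqref{eq:solveCIRG}--\eqref{eq:solveCIRF}: the two equations are genuinely coupled, and the monotonicity on which the intermediate-value argument relies has to be extracted from the Wronskian together with the sign structure \eqref{CIRLrbs} rather than from any closed-form expression. By contrast, once $\tilde{d}^* < y_b$ and $\tilde{b}^* > y_s$ are known, checking the variational inequalities is essentially routine.
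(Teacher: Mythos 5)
Your plan is essentially the paper's own route: the paper proves this theorem by invoking Lemmas 1 and 2 of \cite{zervos2011buy} (after relaxing their assumptions that the lower critical level is nonnegative and that level $0$ is unattainable), and those lemmas are established by exactly the construction you describe --- smooth pasting of $KF$ and $QG$ at two thresholds yielding \eqref{eq:solveCIRG}--\eqref{eq:solveCIRF}, localization $\tilde{d}^*<y_b$ and $\tilde{b}^*>y_s$ via the sign structure \eqref{CIRLrbs}, and verification of the coupled variational inequalities \eqref{VIJ}--\eqref{VIV}, as the paper's own remark confirms. The existence/uniqueness of the solution to the coupled system, which you rightly flag as the main obstacle and only sketch, is likewise the step the paper does not reprove but delegates to that reference.
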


In this case, it is optimal to start and stop an infinite number of times where we start as soon as the CIR process drops to $\tilde{d}^*$  and stop when the process reaches $\tilde{b}^*.$ Note that in the case of Theorem \ref{thm:CIRth2} where it is never optimal to start, the optimal stopping level $b^*$ is the same as that of the optimal stopping problem in Theorem \ref{thm:optLiquCIR}. The optimal starting level $\tilde{d^*}$, which only arises when it is optimal to start and stop sequentially, is in general not the same as $d^*$ in Theorem \ref{thm:optEntryCIR}.

We conclude the section with two remarks. 

\begin{remark}
Given the model parameters, in order to identify which of Theorem \ref{thm:CIRth2} or Theorem \ref{thm:CIRth3} applies, we begin by checking whether $y_b\leq 0$. If so, it is optimal not to enter. Otherwise, Theorem \ref{thm:CIRth2} still  applies   if $c_b \geq \frac{b^* - c_s} {F(b^*)}$ holds.  In the other remaining case, the problem is solved as in Theorem \ref{thm:CIRth3}. In fact, the condition $c_b < \frac{b^* - c_s} {F(b^*)}$ implies $y_b>0$ (see the proof of Lemma \ref{lm:hatHCIR} in the Appendix). Therefore, condition \eqref{condJsw} in Theorem \ref{thm:CIRth3} is in fact identical to \eqref{condJ} in Theorem \ref{thm:optEntryCIR}.
\end{remark}

\begin{remark}
To verify the optimality of the results in Theorems \ref{thm:CIRth2} and \ref{thm:CIRth3}, one can show by direct substitution that the solutions  $(\tilde{J}, \tilde{V})$ in \eqref{CIRsolsw1_J}-\eqref{CIRsolsw1_V} and \eqref{CIRsolsw2_J}-\eqref{CIRsolsw2_V} satisfy the variational inequalities:
\begin{align}
\min\{r\tilde{J}(y)-\L \tilde{J}(y), \tilde{J}(y) - (\tilde{V}(y) - (y + c_b))\}&=0,\label{VIJ}\\
\min\{r\tilde{V}(y)-\L \tilde{V}(y), \tilde{V}(y) - (\tilde{J}(y)+ (y - c_s))\}&=0.\label{VIV}
\end{align}
Indeed, this is the approach used by \cite{zervos2011buy} for checking the solutions of their optimal switching problems.
\end{remark}

%Similar to the Exponential OU model, we can verify by direct substitution that $V$ and $J$ satisfy \eqref{VIV} and \eqref{VIJ}, and $\tilde{V}$ in and $\tilde{J}$ satisfy \eqref{VIJsw} and \eqref{VIVsw}, by replacing $V$, $J$, $\tilde{V}$, $\tilde{J}$, $\L$, $h_s$ and $h_b$ with $V$, $J$, $\tilde{V}$, $\tilde{J}$, $\L$, $h_s$ and $h_b$.

\subsection{Numerical Examples}
We numerically implement Theorems \ref{thm:optLiquCIR}, \ref{thm:optEntryCIR}, and \ref{thm:CIRth3}, and illustrate the associated starting and stopping thresholds.  In Figure \ref{fig:CIRdb} (left), we observe the changes in optimal starting and stopping levels as speed of mean reversion increases. Both starting levels $d^*$ and $\tilde{d}^*$ rise with $\mu$, from 0.0964 to 0.1219 and from 0.1460 to 0.1696, respectively, as $\mu$ increases from 0.3 to 0.85. The optimal switching stopping level $\tilde{b}^*$ also increases. On the other hand, stopping level $b^*$ for the starting-stopping problem stays relatively constant as $\mu$ changes.

In Figure \ref{fig:CIRdb} (right), we see that as the stopping cost $c_s$ increases, the increase in the optimal stopping levels is accompanied by a fall in optimal starting levels. In particular, the  stopping levels, 
$b^*$ and $\tilde{b}^*$ increase. In comparison, both starting levels $d^*$ and $\tilde{d}^*$ fall. The lower starting level and higher stopping level mean that the entry and exit times are both delayed  as a result of a higher transaction cost. Interestingly, although  the cost $c_s$ applies only when the process is stopped, it also  has an impact on the timing to \textit{start}, as seen in the changes in $d^*$ and $\tilde{d}^*$ in the figure.

In Figure \ref{fig:CIRdb}, we can see that the continuation (waiting) region of the switching problem $(\tilde{d}^*, \tilde{b}^*)$  lies within that of the starting-stopping problem $(d^*, b^*)$. The ability to enter and exit multiple times means it is possible to earn a smaller reward on each individual start-stop sequence while maximizing aggregate return. Moreover, we observe that optimal entry and exit levels of the starting-stopping problem is less sensitive to changes in model parameters than the entry and exit thresholds of the switching problem. 

Figure \ref{fig:CIR_Sim_1} shows a simulated CIR path along with optimal entry and exit levels for both starting-stopping and switching problems. Under the starting-stopping problem, it is optimal to start once the process reaches $d^*= 0.0373$ and to stop when the process hits $b^*=0.4316$. For the switching problem, it is optimal to start once the process values hits $\tilde{d}^*=0.1189$ and to stop when the value of the CIR process rises to $\tilde{b}^*=0.2078$. We note that both stopping levels $b^*$ and $\tilde{b}^*$ are higher than the long-run mean $\theta = 0.2$, and the starting levels $d^*$ and $\tilde{d}^*$ are lower than $\theta$.   The process starts at $Y_0 = 0.15> \tilde{d}^*$, under the optimal switching setting, the first time to enter occurs on day 8 when the process falls to 0.1172 and subsequently exits on day 935 at a level of 0.2105. For the starting-stopping problem, entry takes place much later on day 200 when the process hits 0.0306 and exits on day 2671 at 0.4369. Under the optimal switching problem, two entries and two exits will be completed by the time  a single entry-exit sequence is realized for the starting-stopping problem.

\begin{figure}[ht]
\begin{center}\includegraphics[width=3.1in]{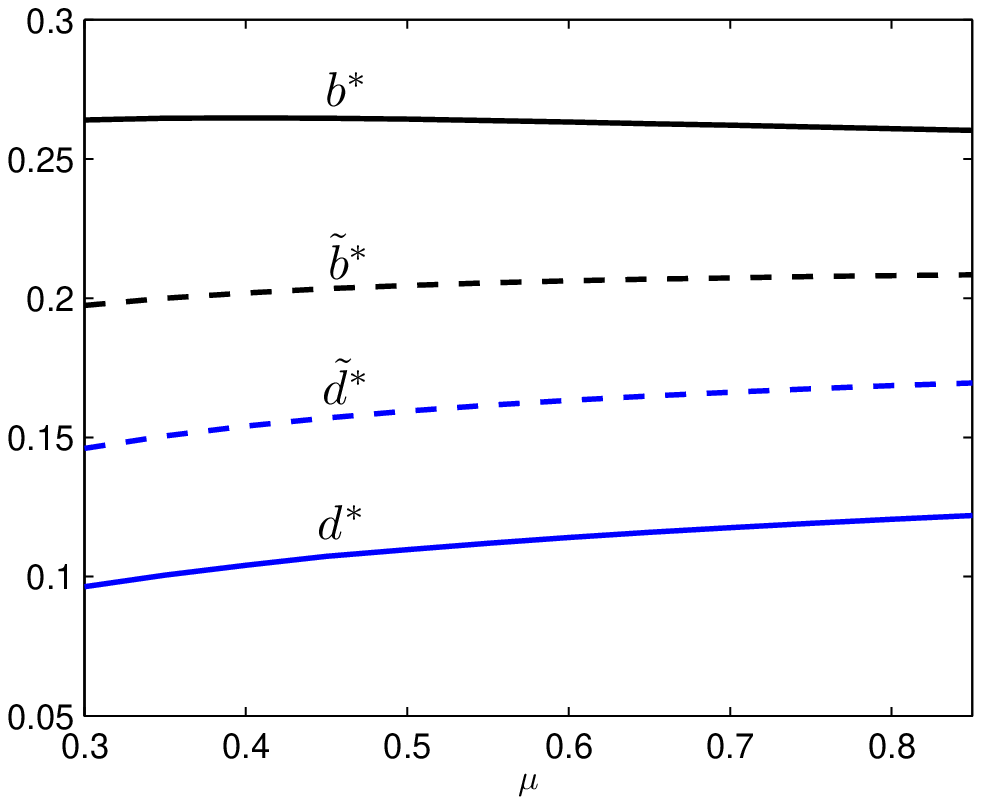}
\includegraphics[width=3.1in]{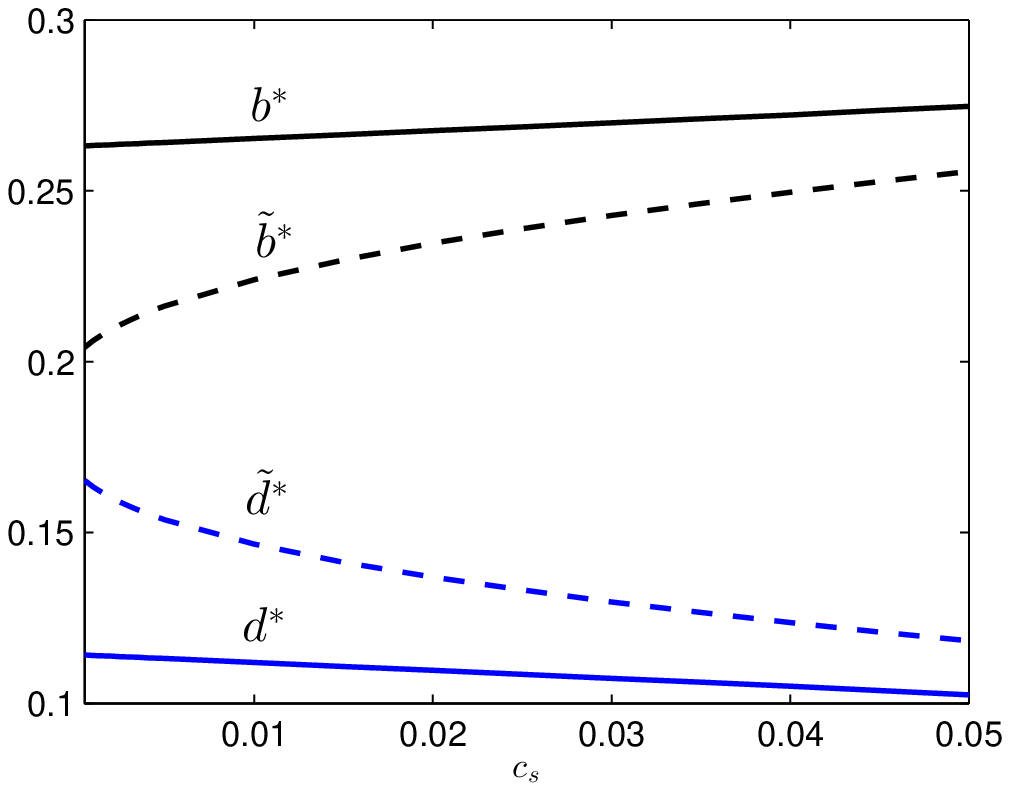}

\caption{\small{(Left) The optimal starting and stopping levels vs speed of mean reversion $\mu$. Parameters: $\sigma = 0.15$, $\theta = 0.2$, $r=0.05$, $c_s=0.001$, $c_b=0.001$.  (Right) The optimal starting and stopping levels vs transaction cost $c_s$. Parameters: $\mu = 0.6$, $\sigma = 0.15$,  $\theta = 0.2$, $r=0.05$, $c_b=0.001$.  }}
\label{fig:CIRdb}
\end{center}\end{figure}

\begin{figure}[t]
\begin{center}
\includegraphics[width=4in]{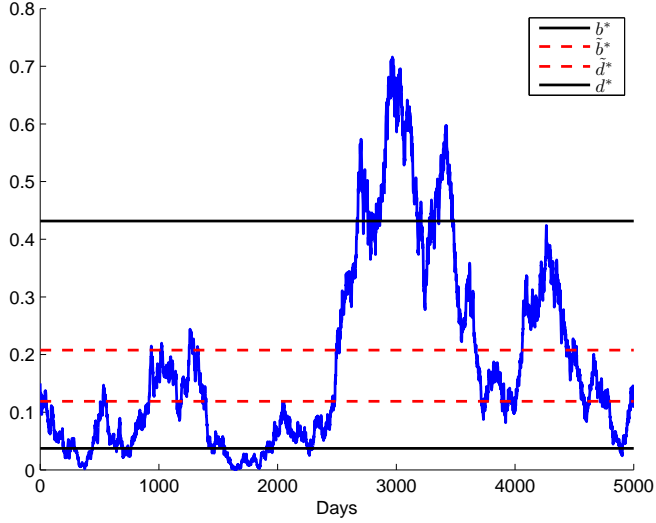}
\caption{\small{A sample CIR path, along with starting and stopping levels. Under the starting-stopping setting, a starting decision is made at $\nu_{d^*} = \inf\{t\geq 0: Y_t \leq d^* = 0.0373\}$, and a stopping decision is made at $\tau_{b^*} = \inf\{ t\geq \nu_{d^*}: Y_t \geq b^*= 0.4316\}$.  Under the optimal switching problem, entry and exit take place at $\nu_{\tilde{d}^*} = \inf\{t\geq 0: Y_t \leq \tilde{d}^* =0.1189\}$ , and $\tau_{\tilde{b}^*} = \inf\{ t\geq \nu_{\tilde{d}^*}: Y_t \geq \tilde{b}^* = 0.2078\}$ respectively. Parameters: $\mu = 0.2$, $\sigma = 0.3$, $\theta = 0.2$, $r=0.05$, $c_s=0.001$, $c_b=0.001$.   }}
\label{fig:CIR_Sim_1}
\end{center}\end{figure}

\section{Methods of Solution and Proofs}\label{sect-method}
We now provide detailed proofs for our analytical results in Section \ref{sect-solution} beginning with the optimal starting-stopping problem. Our main result here is Theorem \ref{thm:VCIR} which provides a mathematical characterization of the value function, and establishes the optimality of our method of constructing the solution.
\subsection{Optimal Starting-Stopping Problem}
We first describe the general solution procedure for the stopping problem $V$,  followed by the starting problem $J.$
\subsubsection{Optimal Stopping Timing}\label{sect-CIRexitproof}
A key step of our solution method  involves   the  transformation
\begin{align}\label{phi}
\phi(y) := -\frac{G(y)}{F(y)}, \quad y \geq 0.
\end{align} With  this, we also  define the function \begin{align}\label{HCIR}
H(z)  := \begin{cases}
\frac{h_s}{F}\circ \phi^{-1}(z) &\, \textrm{ if }\, z<0,\\
\lim_{z\to +\infty}\limits\frac{(h_s(y))^+}{F(y)} &\, \textrm{ if }\, z=0,
\end{cases}
\end{align}
where $h_s$ is given in \eqref{hCIR}. We now prove the analytical form for the value function.
\begin{theorem}\label{thm:VCIR}
Under the CIR model, the value function $V$  of  \eqref{VCIR} is given by
\begin{align}\label{Vchi} V(y)=F(y)W(\phi(y)),\end{align}
with $F$ and $\phi$ given in \eqref{FGCIR} and \eqref{phi} respectively, and $W$ is the decreasing smallest concave majorant of $H$ in \eqref{HCIR}.
\end{theorem}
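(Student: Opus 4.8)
The plan is to follow the Dynkin--Dayanik--Karatzas programme for optimal stopping of one-dimensional diffusions: transform the problem by a martingale change of measure so that the rescaled payoff is stopped along a martingale, then identify the value with a concave envelope. Since $F>0$ and $(e^{-rt}F(Y_t))_{t\ge 0}$ is a true $\P_y$-martingale, define the locally equivalent measure $\hat\P_y$ by $\frac{d\hat\P_y}{d\P_y}\big|_{\F_t}=\frac{e^{-rt}F(Y_t)}{F(y)}$. Because $e^{-rt}F(Y_t)\phi(Y_t)=-e^{-rt}G(Y_t)$ is also a $\P_y$-martingale, $X_t:=\phi(Y_t)$ is a $\hat\P_y$-martingale valued in the open interval $\mathcal I:=\phi\big((0,\infty)\big)\subset(-\infty,0)$, the endpoint $0$ corresponding to the natural boundary $y=+\infty$. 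Moreover the $F$-transform adds the drift $\sigma^2 y\,F'(y)/F(y)$ to the CIR drift, which behaves like $2\mu y$ for large $y$ (by the asymptotics $M'/M\to1$ of the Kummer function), so under $\hat\P_y$ the process $Y$ is transient to $+\infty$, equivalently $X_t\to 0$ $\hat\P_y$-a.s. Using $e^{-r\tau}h_s(Y_\tau)=e^{-r\tau}F(Y_\tau)\,H(\phi(Y_\tau))$ on $\{\tau<\infty\}$, the vanishing of the payoff on $\{\tau=\infty\}$ by convention, and the change of measure, the problem \eqref{VCIR} reduces to $V(y)=F(y)\,\widehat W(\phi(y))$, where $\widehat W(z):=\sup_{\tau\in\setT}\E^{\hat\P}_{\phi^{-1}(z)}[H(X_\tau)]$ is the transformed value function. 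It remains to show $\widehat W$ coincides with the decreasing smallest concave majorant $W$ of $H$.

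I would establish $\widehat W=W$ by two inequalities. For $\widehat W\ge W$: $\widehat W$ majorizes $H$ (take $\tau\equiv 0$); it is concave because, for $z_1<z<z_2$ in $\mathcal I$, stopping $X$ at the exit time of $(z_1,z_2)$ and then continuing optimally, together with $\hat\P_z(X_{\mathrm{exit}}=z_1)=\frac{z_2-z}{z_2-z_1}$ (optional sampling of the bounded martingale $X$), yields $\widehat W(z)\ge\frac{z_2-z}{z_2-z_1}\widehat W(z_1)+\frac{z-z_1}{z_2-z_1}\widehat W(z_2)$; and it is decreasing because, for $z<z'<0$, transience makes $\sigma:=\inf\{t:X_t\ge z'\}$ $\hat\P_z$-a.s. finite, so ``wait for $\sigma$, then act $\varepsilon$-optimally from $z'$'' gives $\widehat W(z)\ge\widehat W(z')-\varepsilon$. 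Minimality of $W$ then forces $\widehat W\ge W$. For the reverse inequality, let $w$ be any decreasing concave majorant of $H$; since $H\ge0$ near $0$, decreasingness gives $w\ge0$, and since $\sup H<\infty$ we may take $w$ bounded. Then $w(X_t)$ is a nonnegative bounded concave function of a $\hat\P_y$-martingale, hence a $\hat\P_y$-supermartingale, and undoing the change of measure shows $\big(e^{-rt}F(Y_t)\,w(\phi(Y_t))\big)_{t\ge0}$ is a nonnegative $\P_y$-supermartingale that dominates $\big(e^{-rt}h_s(Y_t)\big)_{t\ge0}$. Optional sampling at $\tau\wedge n$, Fatou's lemma and $e^{-rt}Y_t\to0$ then give $F(y)\,w(\phi(y))\ge V(y)=F(y)\,\widehat W(\phi(y))$, so $w\ge\widehat W$; in particular $W\ge\widehat W$. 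Combining the two inequalities yields $\widehat W=W$, which is \eqref{Vchi}.

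The main obstacle, I expect, is the boundary analysis underlying the reduction. One must rigorously justify that the $F$-transformed diffusion is transient toward $y=+\infty$ (equivalently $X_t\to0$ a.s. under $\hat\P_y$) --- this is what makes $\widehat W$ decreasing and renders the hitting times above a.s.\ finite --- and this relies on the large-argument asymptotics of the confluent hypergeometric functions $M$ and $U$. One must also treat the left boundary $y=0$ in both regimes: when $2\mu\theta\ge\sigma^2$ it is inaccessible and $\mathcal I$ is unbounded below, whereas when $2\mu\theta<\sigma^2$ it is reflecting and $\phi(0^+)$ is finite, in which case $H$ must be checked to extend continuously there; one should also verify that the value $\lim_{y\to\infty}(y-c_s)^+/F(y)=0$ assigned at $z=0$ in \eqref{HCIR} is consistent with the continuity of $H$. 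The remaining ingredients --- the reduction to a.s.-finite stopping times, the uniform-integrability estimates needed for the optional-sampling/Fatou step, and the existence of $W$ (as the pointwise infimum of the nonempty family of bounded decreasing concave majorants of $H$, the constant $\sup H$ being one such) --- are then routine.
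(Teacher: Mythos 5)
Your route is the explicit $h$-transform version of the Dayanik--Karatzas argument: you pass to the measure $\hat{\P}_y$ with density $e^{-rt}F(Y_t)/F(y)$ so that $\phi(Y)$ becomes a martingale in natural scale, and then identify the transformed value function with the decreasing smallest concave majorant of $H$. The paper never changes measure; it obtains the lower bound from explicit two-sided exit probabilities plus the supermartingale property of $e^{-rt}V(Y_t)$, and the upper bound from the tangent-line trick applied to $W$ together with the martingality of $e^{-rt}F(Y_t)$ and $e^{-rt}G(Y_t)$ stopped at $\tau_0$. The two arguments are essentially isomorphic, and in the regime $2\mu\theta\ge\sigma^2$ (where $0$ is inaccessible, $\phi(0)=-\infty$, and $\phi(Y)$ really is a bounded $\hat{\P}$-local martingale on the open interval) your plan goes through, modulo the routine uniform-integrability and transience checks you flag.

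The gap is in the reflecting case $2\mu\theta<\sigma^2$, and it is not the one you anticipate (continuity of $H$ at $\phi(0)$, which is easy). The $h$-transform by $F$ perturbs the drift near $0$ only at order $O(y)$, so $0$ remains an attainable, instantaneously reflecting boundary under $\hat{\P}_y$; hence $X=\phi(Y)$ reaches the finite level $\phi(0)$ with positive probability and is pushed upward there. A process confined to $[\phi(0),0)$ that attains $\phi(0)$ without being absorbed cannot be a martingale (optional sampling started from $\phi(0)$ would force $X\equiv\phi(0)$), so your premise that $X$ is a $\hat{\P}_y$-martingale fails, and with it the step ``$w(X_t)$ is a concave function of a martingale, hence a supermartingale.'' The conclusion is still true, but for a different reason: the push at $\phi(0)$ is upward and $w$ is \emph{decreasing}, so the boundary contribution to the drift of $w(X)$ is nonpositive; concavity alone does not suffice, and the decreasing property --- which you invoke only to get $w\ge0$ --- is what actually rescues the supermartingale claim at the boundary. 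This is precisely the role of the decomposition at $\tau_0$ and the estimate of term (II) in the paper's proof of Theorem \ref{thm:VCIR}. Your concavity argument for $\hat{W}$ is unaffected (interval exits can be taken with $z_1>\phi(0)$), but you must either repeat the paper's $\tau_0$-splitting or justify the sign of the boundary term explicitly before the reverse inequality is complete.
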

 \begin{proof}
We first show that $V(y)\geq F(y)W(\phi(y))$. Start at any $y\in [0,+\infty)$, we consider the first stopping time of $Y$ from an interval $[a,b]$ with $0\leq a \leq y \leq b \leq +\infty$. We compute the corresponding expected discounted reward
\begin{align}
\E_y\{e^{-r(\tau_a\wedge\tau_b)}h_s(Y_{\tau_a\wedge\tau_b})\} &= h_s(a)\E_y\{e^{-r\tau_a}\indic{\tau_a<\tau_b}\} + h_s(b) \E_y\{e^{-r\tau_b}\indic{\tau_a>\tau_b}\}\notag\\%\label{EH1CIR}
&= h_s(a)\frac{F(y)G(b)-F(b)G(y)}{F(a)G(b)-F(b)G(a)} + h_s(b)\frac{F(a)G(y)-F(y)G(a)}{F(a)G(b)-F(b)G(a)}\notag\\
&= F(y)\left[\frac{h_s(a)}{F(a)}\frac{\phi(b)-\phi(y)}{\phi(b)-\phi(a)} + \frac{h_s(b)}{F(b)}\frac{\phi(y)-\phi(a)}{\phi(b)-\phi(a)} \right]\notag\\%\label{EH2CIR}
&= F(\phi^{-1}(z))\left[H(z_a)\frac{z_b-z}{z_b-z_a}+H(z_b)\frac{z-z_a}{z_b-z_a} \right], \label{EH3CIR}
\end{align}
where  $z_a=\phi(a)$, $z_b=\phi(b)$.
%\begin{align*}
%H(z)  := \begin{cases}
%\frac{h}{F}\circ \phi^{-1}(y) &\, \textrm{ if }\, y>0,\\
%\lim_{x\to +\infty}\limits\frac{(h(x))^+}{F(x)} &\, \textrm{ if }\, y=0.
%\end{cases}
%\end{align*}
Since $V(y) \geq \sup _{\{a,b: a\le y\le b\}}\E_y\{e^{-r(\tau_a\wedge\tau_b)}h_s(Y_{\tau_a\wedge\tau_b})\}$, we have
\begin{align}\label{VabCIR}
\frac{V(\phi^{-1}(z))}{F(\phi^{-1}(z))} \geq \sup _{\{z_a,z_b:z_a\le z\le z_b\}} \left[H(z_a)\frac{z_b-z}{z_b-z_a}+H(z_b)\frac{z-z_a}{z_b-z_a} \right] ,
\end{align}
which implies that $V(\phi^{-1}(z))/F(\phi^{-1}(z))$ dominates the concave majorant of $H$.

Under the CIR model, the class of  interval-type strategies does not  include all single threshold-type strategies. 
%As pointed out in Remark \ref{rmk:OUupper}, in the OU case one can  create single threshold-type strategy by setting $a=-\infty$. However, 
In particular, the minimum value that $a$ can take is $0$. If  $2\mu\theta <  \sigma^2$, then $Y$ can reach level $0$ and reflects. The interval-type strategy with $a=0$ implies stopping the process $Y$ at level $0$, even though it could be optimal to wait and let $Y$ evolve.

Hence, we must also consider separately the candidate strategy of waiting for $Y$ to reach an  upper level $b\geq y$ without a lower stopping level. The well-known supermartingale property of $(e^{-rt}V(Y_t))_{t\ge 0}$ (see Appendix D of \cite{Karatzas1998a}) implies that $V(y)\geq \E_y\{e^{-r\tau}V(Y_\tau)\}$ for $\tau\in\setT$. Then,  taking $\tau = \tau_b$, we have
\begin{align*}
V(y)\geq \E_y\{e^{-r\tau_b}V(Y_{\tau_b})\} = V(b)\frac{F(y)}{F(b)},
\end{align*}
or equivalently,
\begin{align}\label{VbCIR}
\frac{V(\phi^{-1}(z))}{F(\phi^{-1}(z))} = \frac{V(y)}{F(y)} \geq \frac{V(b)}{F(b)} = \frac{V(\phi^{-1}(z_b))}{F(\phi^{-1}(z_b))},
\end{align}
which indicates that $V(\phi^{-1}(z))/F(\phi^{-1}(z))$ is \emph{decreasing}. By \eqref{VabCIR} and \eqref{VbCIR}, we now see that $V(y)\geq F(y)W(\phi(y))$, where $W$ is the \emph{decreasing} smallest concave majorant of $H$.

For the reverse inequality, we first show that
\begin{align}\label{eq:CIRop}
F(y)W(\phi(y))\geq \E_y\{e^{-r(t\wedge \tau)}F(Y_{t\wedge \tau})W(\phi(Y_{t\wedge \tau}))\},
\end{align}
for $y\in [0,+\infty)$, $\tau\in\setT$ and $t\geq 0$. If the initial value $y=0$, then    the decreasing property of $W$ implies the inequality
\begin{align}\label{eq:CIRop0}
\E_0\{e^{-r({t\wedge \tau})}F(Y_{t\wedge \tau})W(\phi(Y_{t\wedge \tau}))\}\leq \E_0\{e^{-r(t\wedge \tau)}F(Y_{t\wedge \tau})\}W(\phi(0))=F(0)W(\phi(0)),
\end{align}
where  the equality follows from the martingale property of $(e^{-rt}F(Y_t))_{t\ge 0}$.

%To show the reverse inequality,  we first show that $G(x)W(\psi(x)) \geq \E_x\{e^{-r(t\wedge\tau)}G(X_{t\wedge\tau})W(\psi(X_{t\wedge\tau}))\}$, for $\tau\in\setT$ and $t\geq 0$. Since $W$ is concave, then for any   fixed $y$, there exists an affine function $L_y(z):=m_y z +c_y$ such that $L_y(z) \geq W(z)$ and $L_y(y)=W(y)$ at $z=y$, where $m_y$ and $c_y$ are both constants depending on  $y$.  In turn, this implies  the inequality
%\begin{align}
%\E_x\{e^{-r(t\wedge\tau)}G(X_{t\wedge\tau})W(\psi(X_{t\wedge\tau}))\} &\leq \E_x\{e^{-r(t\wedge\tau)}G(X_{t\wedge\tau})L_{\psi(x)}(\psi(X_{t\wedge\tau}))\} \notag\\
%&= m_{\psi(x)}\E_x\{e^{-r(t\wedge\tau)}G(X_{t\wedge\tau})\psi(X_{t\wedge\tau})\} + c_{\psi(x)} \E_x\{e^{-r(t\wedge\tau)}G(X_{t\wedge\tau})\}\notag\\
%&= m_{\psi(x)}\E_x\{e^{-r(t\wedge\tau)}F(X_{t\wedge\tau})\}+ c_{\psi(x)} \E_x\{e^{-r(t\wedge\tau)}G(X_{t\wedge\tau})\}\notag\\
%&= m_{\psi(x)}F(x) + c_{\psi(x)} G(x)\label{1}\\
%&= G(x) L_{\psi(x)}(\psi(x))\notag\\
%&= G(x)W(\psi(x)),
%\end{align}
%where the equality $\eqref{1}$ follows from the martingale property of $(e^{-rt}F(X_t))_{t\ge0}$ and $(e^{-rt}G(X_t))_{t\ge0}$.

When $y>0$,  the  concavity of $W$ implies that, for any   fixed $z$, there exists an affine function $L_z(\alpha):=m_z \alpha +c_z$ such that $L_z(\alpha) \geq W(\alpha)$ for $\alpha\geq\phi(0)$ and $L_z(z)=W(z)$ at $\alpha=z$, with constants $m_z$ and $c_z$. In turn, this yields  the inequality
\begin{align}
\E_y&\{e^{-r(\tau_0\wedge t\wedge\tau)}F(Y_{\tau_0\wedge t\wedge\tau})W(\phi(Y_{\tau_0\wedge t\wedge\tau}))\} \label{eq:CIRop4}\\
&\leq \E_y\{e^{-r(\tau_0\wedge t\wedge\tau)}F(Y_{\tau_0\wedge t\wedge\tau})L_{\phi(y)}(\phi(Y_{\tau_0\wedge t\wedge\tau}))\} \notag\\
&= m_{\phi(y)}\E_y\{e^{-r(\tau_0\wedge t\wedge\tau)}F(Y_{\tau_0\wedge t\wedge\tau})\phi(Y_{\tau_0\wedge t\wedge\tau})\} + c_{\phi(y)} \E_y\{e^{-r(\tau_0\wedge t\wedge\tau)}F(Y_{\tau_0\wedge t\wedge\tau})\}\notag\\
&= -m_{\phi(y)}\E_y\{e^{-r(\tau_0\wedge t\wedge\tau)}G(Y_{\tau_0\wedge t\wedge\tau})\}+ c_{\phi(y)} \E_y\{e^{-r(\tau_0\wedge t\wedge\tau)}F(Y_{\tau_0\wedge t\wedge\tau})\}\notag\\
&= -m_{\phi(y)}G(y) + c_{\phi(y)} F(y)\label{2}\\
&= F(y) L_{\phi(y)}(\phi(y))\notag\\
&= F(y)W(\phi(y)),\label{eq:CIRop1}
\end{align}
where $\eqref{2}$ follows from the martingale property of $(e^{-rt}F(Y_t))_{t\ge0}$ and $(e^{-rt}G(Y_t))_{t\ge0}$. If $2\mu\theta\geq \sigma^2$, then $\tau_0=+\infty$ for $y>0$. This immediately reduces \eqref{eq:CIRop4}-\eqref{eq:CIRop1}  to the desired inequality \eqref{eq:CIRop}.

On the other hand, if $2\mu\theta< \sigma^2$, then we decompose   \eqref{eq:CIRop4} into two terms:
\begin{align}
&\E_y\{e^{-r(\tau_0 \wedge t\wedge\tau)}F(Y_{\tau_0 \wedge t\wedge\tau})W(\phi(Y_{\tau_0 \wedge t\wedge\tau}))\} \notag\\
&=\underbrace{\E_y\{e^{-r(t\wedge \tau)}F(Y_{t\wedge \tau})W(\phi(Y_{t\wedge \tau}))\indic{t\wedge \tau\leq \tau_0}\}}_{\text{(I)}} + \underbrace{\E_y\{e^{-r\tau_0}F(Y_{\tau_0})W(\phi(Y_{\tau_0}))\indic{t\wedge \tau>\tau_0}\}}_{\text{(II)}}. \label{eq:CIRop2}
\end{align}
By the optional sampling theorem and   decreasing property of $W$,  the second term satisfies
\begin{align}
\text{(II)}  &= W(\phi(0))\E_y\{e^{-r\tau_0}F(Y_{\tau_0})\indic{t\wedge \tau>\tau_0}\} \notag\\
&\geq W(\phi(0))\E_y\{e^{-r(t\wedge\tau)}F(Y_{t\wedge\tau})\indic{t\wedge \tau>\tau_0}\}\notag\\
& \geq \E_y\{e^{-r(t\wedge\tau)}F(Y_{t\wedge\tau})W(\phi(Y_{t\wedge\tau}))\indic{t\wedge \tau>\tau_0}\}=: \text{(II')}.\label{eq:CIRop3}
\end{align}
Combining \eqref{eq:CIRop3} with   \eqref{eq:CIRop1}, we arrive at
\begin{align*}
F(y)W(\phi(y))  \ge \text{(I)} + \text{(II')} = \E_y\{e^{-r(t\wedge \tau)}F(Y_{t\wedge \tau})W(\phi(Y_{t\wedge \tau}))\},%\label{FWsuper}
\end{align*}
%By the strong Markov property of $X$ and optional stopping theorem, we have
%\begin{align*}
%F(x)W(\phi(x)) \geq \E_x\{e^{-r\tau}F(X_\tau)W(\phi(X_\tau)),
%\end{align*}
for all $y>0$. In all,    inequality \eqref{eq:CIRop} holds for all $y \in [0,+\infty)$, $\tau \in \setT$ and  $t\geq 0$. From \eqref{eq:CIRop} and the fact that $W$ majorizes $H$, it follows that
\begin{align}
F(y)W(\phi(y)) &\geq \E_y\{e^{-r(t\wedge \tau)}F(Y_{t\wedge \tau})W(\phi(Y_{t\wedge \tau}))\} \\
&\geq \E_y\{e^{-r(t\wedge \tau)}F(Y_{t\wedge \tau})H(\phi(Y_{t\wedge \tau}))\}  \geq \E_y\{e^{-r(t\wedge \tau)}h_s(Y_{t\wedge \tau})\}.\label{fdsa}
\end{align}
Maximizing \eqref{fdsa} over all $\tau\in\setT$ and $t\geq 0$ yields the reverse inequality $F(y)W(\phi(y)) \geq V(y)$.
\end{proof}

In summary, we have found  an expression for the value function  $V(y)$ in  \eqref{Vchi}, and proved that it is sufficient to consider only candidate stopping times   described by the first   time   $Y$ reaches  a single upper threshold or exits an interval. To determine the optimal timing strategy, we  need to understand the properties of $H$ and its smallest concave majorant $W$. To this end, we have the following lemma.

\begin{lemma}\label{lm:HCIR}
The function $H$ is continuous on $[\phi(0), 0]$, twice differentiable on $(\phi(0), 0)$ and possesses the following properties:
\begin{enumerate}[(i)]
\item \label{HCIR0} $H(0)=0$, and
\begin{align}\label{eq:HCIR0}
H(z)\begin{cases}
<0 &\, \textrm{ if }\, z \in [\phi(0), \phi(c_s)),\\
>0 &\, \textrm{ if }\, z \in (\phi(c_s),0).
\end{cases}
\end{align}
\item \label{HCIR1}
$H(z)$ is strictly increasing for $z \in (\phi(0),\phi(c_s)\vee\phi(y_s))$.
\item \label{HCIR2}
\begin{align*}
H(z) \textrm{ is }
\begin{cases}
\textrm{convex} &\, \textrm{ if }\, z \in (\phi(0), \phi(y_s)],\\
\textrm{concave} &\, \textrm{ if }\, z \in [\phi(y_s),0).
\end{cases}
\end{align*}
\end{enumerate}
\end{lemma}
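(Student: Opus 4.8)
The plan is to translate everything back to the $y$-variable via the change of variables $z=\phi(y)=-G(y)/F(y)$, which is a strictly increasing bijection from $[0,\infty]$ onto $[\phi(0),0]$, and then read off the stated sign, monotonicity, and convexity properties from the known behavior of $h_s$, $F$, $G$ and the operator $\L-r$. Continuity and twice-differentiability of $H$ on the stated sets follow from the smoothness of $F,G$ (solutions of the ODE \eqref{ODE}), the fact that $F>0$, and the inverse function theorem applied to $\phi$ (note $\phi'>0$ because the Wronskian $F'G-FG'$ has a constant sign). For item (i): $H(\phi(y))=h_s(y)/F(y)=(y-c_s)/F(y)$, and since $F>0$ this has the sign of $y-c_s$; sending $y\to\infty$ (equivalently $z\to0$) we need $H(0)=\lim (y-c_s)^+/F(y)=0$, which holds because $F(y)\to\infty$ faster than any linear function — this is a standard growth fact for the Kummer function $M(a,b;z)$ with $a,b>0$, and can also be seen from $e^{-rt}F(Y_t)$ being a martingale together with $F$ being the increasing solution. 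That gives \eqref{eq:HCIR0} with $\phi(c_s)$ as the sign-change point.

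For items (ii) and (iii) I would use the standard identity relating derivatives of $H$ in $z$ to the operator $\L-r$ acting on $h_s$ in $y$. Specifically, writing $g=h_s/F$ and using that $\phi'=(F'G-FG')/F^2 = W_0/F^2$ for the (signed) Wronskian constant $W_0$, one computes
\begin{align*}
\frac{dH}{dz} = \frac{g'(y)}{\phi'(y)} = \frac{F^2}{W_0}\cdot\frac{h_s'(y)F(y)-h_s(y)F'(y)}{F(y)^2} = \frac{h_s'F - h_sF'}{W_0},
\end{align*}
and differentiating once more and using $\L F = rF$ one obtains the key relation
\begin{align*}
\frac{d^2H}{dz^2} = \frac{2F(y)^2}{\sigma^2 y\, W_0^2}\,\bigl[(\L-r)h_s(y)\bigr]\cdot\bigl(\text{positive factor}\bigr),
\end{align*}
so that (up to a fixed sign determined by $W_0$) the convexity of $H$ in $z$ matches the sign of $(\L-r)h_s(y)$. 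By \eqref{CIRLrbs}, $(\L-r)h_s(y)>0$ for $y<y_s$ and $<0$ for $y>y_s$; since $\phi$ is increasing this is exactly the statement that $H$ is convex on $(\phi(0),\phi(y_s)]$ and concave on $[\phi(y_s),0)$, giving (iii). For (ii), monotonicity of $H$ in $z$ is the sign of $dH/dz = (h_s'F-h_sF')/W_0$; I would check that $h_s'F - h_s F' = F - (y-c_s)F'$ is positive for $y$ below the relevant threshold. At $y=c_s$ this equals $F(c_s)>0$; its derivative in $y$ is $-(y-c_s)F''<0$ for $y>c_s$ and $>0$ for $y<c_s$ (using $F''>0$), so $F-(y-c_s)F'$ is increasing on $[0,c_s]$ and decreasing on $[c_s,\infty)$, hence strictly positive on an interval $(0,b^*)$ where $b^*$ is its unique positive root — and $b^*$ is exactly the stopping level of \eqref{eq:solvebCIR}. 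In particular $H$ is strictly increasing on $(\phi(0),\phi(c_s\vee y_s))$ provided $c_s\vee y_s < b^*$, which is guaranteed by Theorem \ref{thm:optLiquCIR} ($b^*\in(c_s\vee y_s,\infty)$); this yields (ii).

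The main obstacle I anticipate is not the computation but getting the boundary behavior and the sign bookkeeping exactly right: verifying $H(0)=0$ rigorously requires the growth asymptotics of the confluent hypergeometric function $M(r/\mu,2\mu\theta/\sigma^2;2\mu y/\sigma^2)$ as $y\to\infty$ (it grows like $e^{2\mu y/\sigma^2}y^{\,r/\mu-2\mu\theta/\sigma^2}$ up to constants), and one must also confirm that $\phi^{-1}$ is twice differentiable on the open interval so the second-derivative formula is legitimate — both are routine but need care. The other place to be careful is the sign of the Wronskian $W_0=F'G-FG'$ and how it interacts with the direction of the inequalities after the change of variables; I would pin this down once at the start (e.g.\ by evaluating at a convenient point or invoking the standard convention that makes $\phi$ increasing) and then the rest of (ii)–(iii) follows mechanically from \eqref{CIRLrbs} and the convexity of $F$.
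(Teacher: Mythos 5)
Your overall route is the same as the paper's: pull everything back to the $y$-variable through $\phi$, get (i) from the sign of $h_s$ and the superlinear growth of $F$, get (iii) from the identity $H''(z)\propto(\L-r)h_s(y)$ together with \eqref{CIRLrbs}, and get (ii) from the sign of $F(y)-(y-c_s)F'(y)$. Two small technical remarks: the quantity $F'G-FG'$ is \emph{not} a constant for the CIR generator (it is a positive constant times the scale density $\exp\bigl(-\int^y 2\mu(\theta-u)/(\sigma^2u)\,du\bigr)$), so treating it as the constant $W_0$ when you differentiate a second time is not literally correct, although it is positive everywhere and your sign conclusions survive; the paper's corresponding formula is $H''(z)=\tfrac{2}{\sigma^2F(y)(\phi'(y))^2}(\L-r)h_s(y)$.

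The one genuine gap is in part (ii): you close the argument by asserting that $c_s\vee y_s<b^*$ is ``guaranteed by Theorem \ref{thm:optLiquCIR}.'' That is circular. The localization $b^*\in(c_s\vee y_s,\infty)$ in Theorem \ref{thm:optLiquCIR} is itself \emph{derived from} Lemma \ref{lm:HCIR} (the proof of the theorem reads off the location of the peak $z^*=\phi(b^*)$ from properties (ii) and (iii) of the lemma), so you cannot invoke it here. Your own computation only gives positivity of $q(y):=F(y)-(y-c_s)F'(y)$ up to $c_s$ (where $q(c_s)=F(c_s)>0$); what is missing is the inequality $q(y_s)>0$ when $y_s>c_s$. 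This can be supplied without the theorem: for instance, the paper works with $v(y)=F(y)/F'(y)-(y-c_s)$, which is strictly decreasing since $v'=-FF''/(F')^2<0$, and checks $v(c_s)>0$ and $v(y_s)>0$ directly. Alternatively, you can derive it from your own part (iii): since $H'(z)=q(y)/(F'(y)G(y)-F(y)G'(y))$ with positive denominator, and $H'$ is increasing on $(\phi(0),\phi(y_s))$ by convexity while $H'(z)\to 0^+$ as $z\downarrow\phi(0)$ (because $q(0)=1+c_sF'(0)>0$ while $F'G-FG'\to+\infty$ there), it follows that $H'>0$, hence $q>0$, on all of $(0,y_s]$. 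Either repair closes the gap; as written, the proof of (ii) does not stand on its own.
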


 In Figure \ref{fig:HCIR}, we see that  $H$ is first increasing then decreasing, and first convex  then concave. Using these properties, we now derive the optimal stopping timing.

\begin{figure}[h]
\begin{center}
 %\subfigure[\small{Sketches of $H$ and $W$. }]
 \subfigure[\small{$2\mu\theta < \sigma^2$}]
 {\label{fig:HCIR_rf}
 \scalebox{0.39} {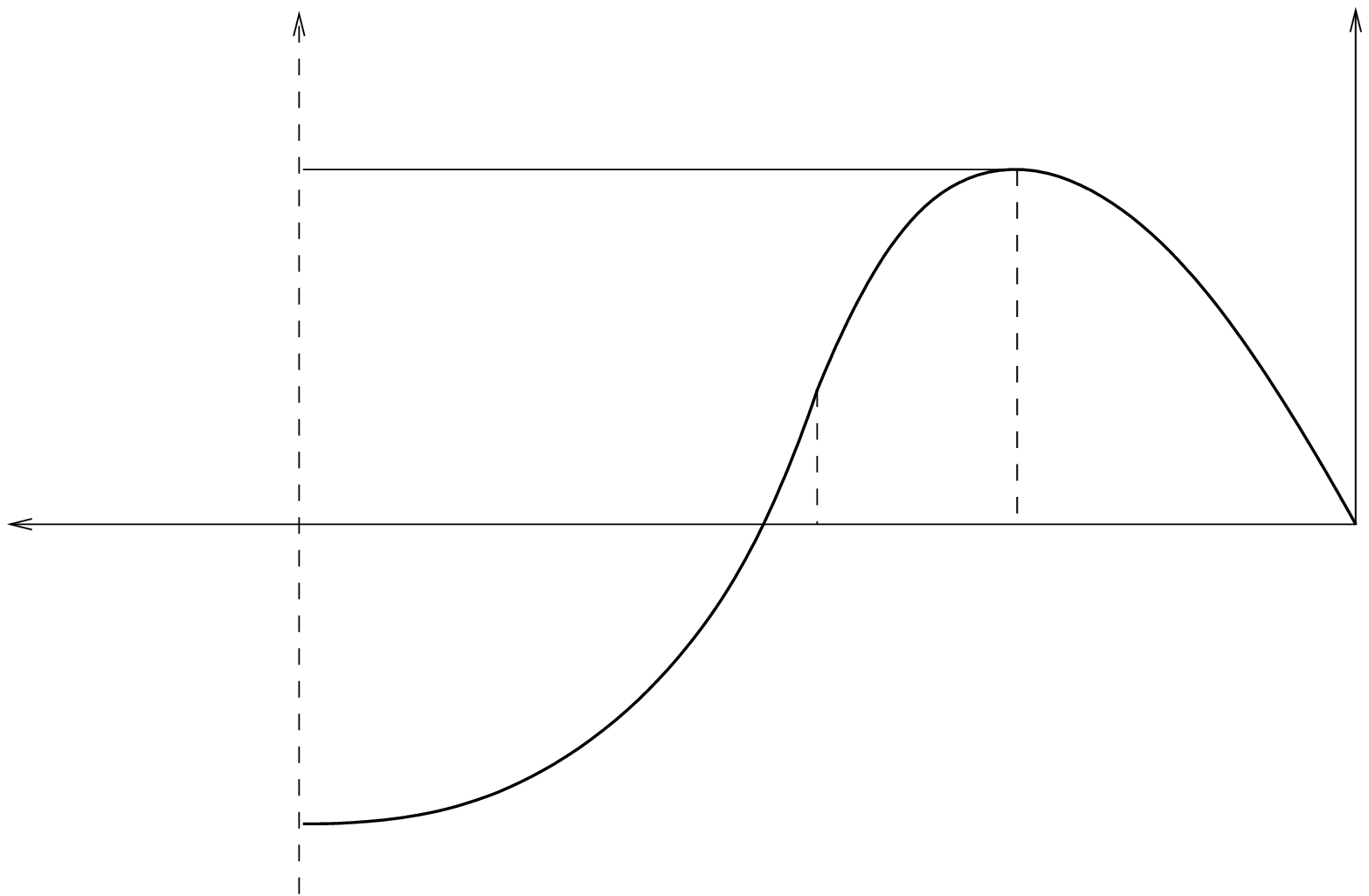}}
 %\subfigure[\small{Sketches of $\hat{H}$ and $\hat{W}$. }]
 ~~~~\qquad\subfigure[\small{$2\mu\theta \geq\sigma^2$ }]
 {\label{fig:HCIR_en}\scalebox{0.35}{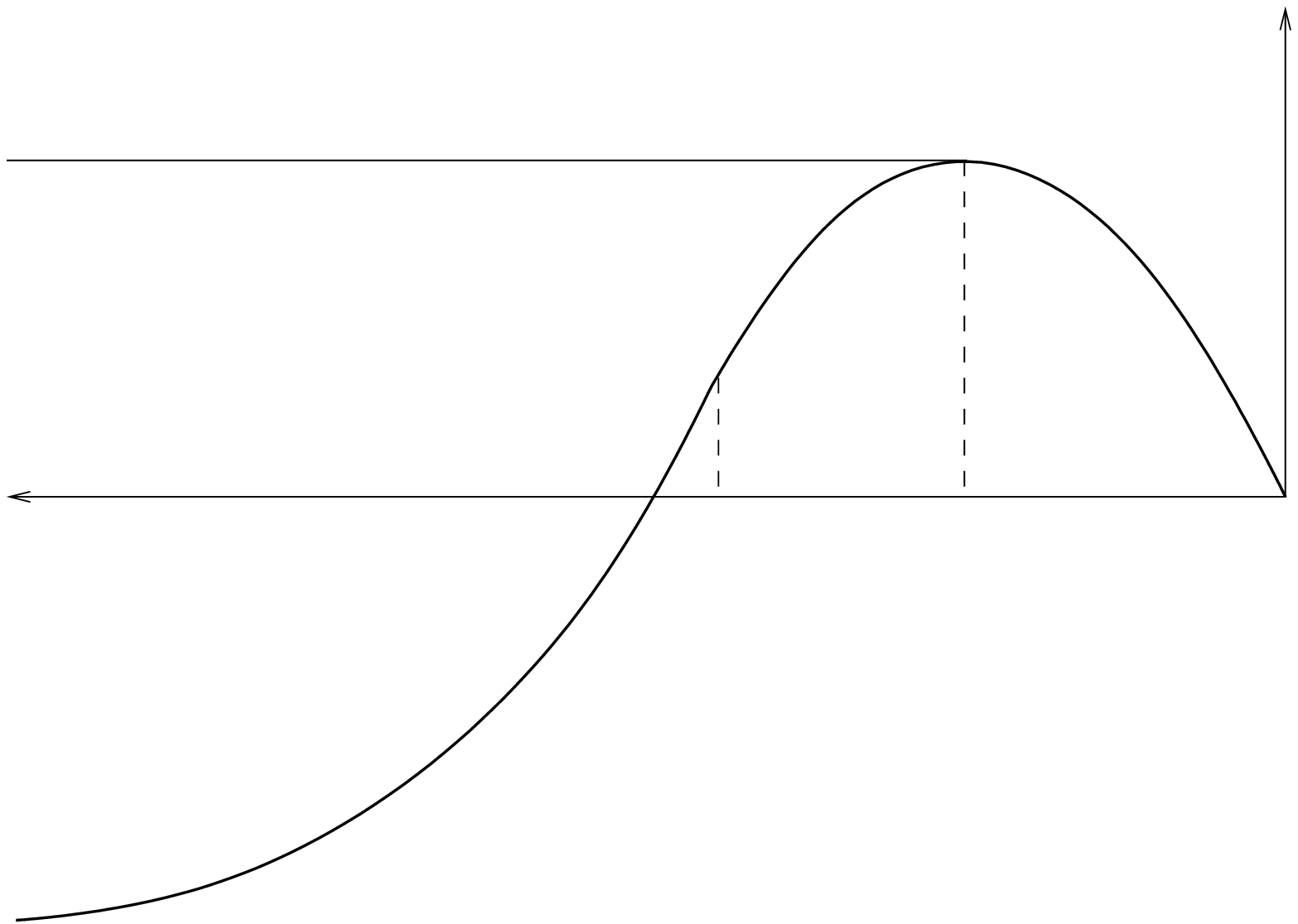}}
\end{center}
%\caption{\small{Sketches of $H$, $W$,$\hat{H}$ and $\hat{W}$.}}
\setlength{\abovecaptionskip}{-5pt}
\caption{\small{Sketches of $H$ and $W$. The function $W$ is equal to the constant $H(z^*)$ on $(\phi(0),z^*)$, and coincides with $H$ on $[z^*,0]$. Note that $-\infty \!<\! \phi(0) \!<\!0$ if $2\mu\theta < \sigma^2$, and $\phi(0)\! =\! -\infty$ if $2\mu\theta \geq\sigma^2$.}}
\label{fig:HCIR}
\end{figure}

%Now we present a proof of Theorem \ref{thm:optLiquCIR}.
\paragraph*{Proof of Theorem \ref{thm:optLiquCIR}}
We  determine the value function in the form: $V(y)  =
F(y)W(\phi(y))$, where $W$ is the decreasing
smallest concave majorant of $H$. By Lemma \ref{lm:HCIR} and
Figure \ref{fig:HCIR}, $H$ peaks at  $z^* > \phi(c_s)
\vee \phi(y_s)$ so that
\begin{align}\label{eq:HzLiquCIR}
{H}^\prime\!(z^*)=0.
\end{align}
In turn, the decreasing  smallest concave majorant admits the form:
\begin{align}\label{WCIR}
W(z) = \begin{cases}
H(z^*) &\, \textrm{ if }\, z < z^*,\\
H(z) &\, \textrm{ if }\, z \geq z^*.
\end{cases}
\end{align}
Substituting $b^* = \phi^{-1}(z^*)$ into \eqref{eq:HzLiquCIR}, we have
\begin{align*}
{H}^\prime\!(z^*) &= \frac{F(\phi^{-1}(z^*))-(\phi^{-1}(z^*)-c_s){F}^\prime\!(\phi^{-1}(z^*))}{{F}^\prime\!(\phi^{-1}(z^*))G(\phi^{-1}(z^*)) - F(\phi^{-1}(z^*)) {G}^\prime\!(\phi^{-1}(z^*))}\\
&= \frac{F(b^*) - (b^*-c_s){F}^\prime\!(b^*)}{{F}^\prime\!(b^*)G(b^*)-F(b^*){G}^\prime\!(b^*)},
\end{align*}
which can be further simplified to \eqref{eq:solvebCIR}. We can express $H(z^*)$ in terms of $b^*$:
\begin{align}\label{HzCIR}
H(z^*) = \frac{b^*-c_s}{F(b^*)}.
\end{align}
Applying \eqref{HzCIR} to \eqref{WCIR}, we get
\begin{align*}
W(\phi(y)) = \begin{cases}
H(z^*) = \frac{b^*-c_s}{F(b^*)} &\, \textrm{ if }\, y < b^*,\\
H(\phi(y)) = \frac{y-c_s}{F(y)} &\, \textrm{ if }\, y \geq b^*.
\end{cases}
\end{align*}
Finally, substituting this into the value function $V(y)  = F(y)W(\phi(y))$, we conclude.

\subsubsection{Optimal Starting Timing}\label{sect-CIRentryproof}
We now turn to the optimal starting problem. Our methodology in Section \ref{sect-CIRexitproof} applies to general payoff functions, and thus can be applied to the optimal starting problem \eqref{JCIR} as well. To this end, we apply the same transformation \eqref{phi} and define the function
\begin{align*}
\hat{H}(z)  := \begin{cases}
\frac{\hat{h}}{F}\circ \phi^{-1}(z) &\, \textrm{ if }\, z<0,\\
\lim_{y\to +\infty}\limits\frac{(\hat{h}(y))^+}{F(y)} &\, \textrm{ if }\, z=0,
\end{cases}
\end{align*}
where $\hat{h}$ is given in \eqref{hhat}. We then follow  Theorem \ref{thm:optLiquCIR} to determine the value function $J$. This amounts to  finding  the \emph{decreasing} smallest concave majorant $\hat{W}$ of $\hat{H}$.  Indeed, we can   replace $H$ and $W$  with $\hat{H}$ and $\hat{W}$ in Theorem \ref{thm:optLiquCIR} and its proof. As a result, the value function of the optimal starting timing problem must take the form \[J(y)=F(y)\hat{W}(\phi(y)).\]

To solve the optimal starting timing problem, we need to understand the properties of $\hat{H}$.
\begin{lemma}\label{lm:hatHCIR}
The function $\hat{H}$ is continuous on $[\phi(0),0]$, differentiable on $(\phi(0),0)$, and twice differentiable on $(\phi(0),\phi(b^*)) \cup (\phi(b^*),0)$, and possesses the following properties:
\begin{enumerate}[(i)]
\item \label{hatHCIR0}
$\hat{H}(0)=0$. Let $\bar{d}$ denote the unique solution to $\hat{h}(y)=0$, then  $\bar{d}<b^*$ and
\begin{align*}
\hat{H}(z) \begin{cases}
>0 &\, \textrm{ if }\, z \in [\phi(0), \phi(\bar{d})),\\
< 0 &\, \textrm{ if }\, z \in (\phi(\bar{d}), 0).
\end{cases}
\end{align*}

\item \label{hatHCIR1}
$\hat{H}(z)$ is strictly increasing for $z > \phi(b^*)$ and $\lim_{z\to \phi(0)}\hat{H}^{'}\!(z) = 0$.

\item \label{hatHCIR2}
%Let $\underline{b} = (\mu\theta - rc_b)/(\mu+r)$, then $0< \underline{b}<y_s$ and
%Let $\underline{b}$ be the unique solution to $(\L - r)\hat{h}(y) =0$, then $0< \underline{b}<L^*$ and
\begin{align*}
\hat{H}(z) \textrm{ is }
\begin{cases}
\textrm{concave} &\, \textrm{ if }\, z \in (\phi(0), \phi(y_b)),\\
\textrm{convex} &\, \textrm{ if }\, z \in (\phi(y_b),0).
\end{cases}
\end{align*}
\end{enumerate}
\end{lemma}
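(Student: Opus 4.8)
The plan is to push every assertion about $\hat H$ back to a statement about $\hat h=V-h_b$ (see \eqref{hhat}, \eqref{hCIR}) through the same ``$F$--$\phi$'' transform identities already used for $H$ in the proof of Lemma \ref{lm:HCIR}, feeding in the explicit form of $V$ from Theorem \ref{thm:optLiquCIR}. Write $y=\phi^{-1}(z)$ and let $W:=F'G-FG'>0$ be the Wronskian of the ODE \eqref{ODE}; by Abel's identity $W(y)=c_0\,y^{-2\mu\theta/\sigma^2}e^{2\mu y/\sigma^2}$ for some $c_0>0$, and $\phi'=W/F^2$. Using $(\L-r)F=0$, a direct computation (identical to the one for $H$) yields, wherever $\hat h$ is twice differentiable,
\begin{align}
\hat H'(z)=\frac{\hat h'(y)F(y)-\hat h(y)F'(y)}{W(y)},\qquad
\hat H''(z)=\frac{2F(y)^3}{\sigma^2 y\,W(y)^2}\,(\L-r)\hat h(y).
\end{align}
Since the prefactor of $(\L-r)\hat h$ is strictly positive for $y>0$, the convexity/concavity of $\hat H$ in $z$ is dictated by the sign of $(\L-r)\hat h$ in $y$, and its monotonicity by that of $\hat h'F-\hat hF'$; so it suffices to analyze $\hat h$. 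By Theorem \ref{thm:optLiquCIR}, $\hat h(y)=K_0F(y)-(y+c_b)$ on $[0,b^*)$ with $K_0:=V(0)=(b^*-c_s)/F(b^*)$, while $\hat h(y)\equiv -(c_s+c_b)$ on $[b^*,\infty)$. The smooth fit of $V$ at $b^*$, namely $K_0F'(b^*)=1$ (which is exactly \eqref{eq:solvebCIR}), makes $\hat h$ of class $C^1$ on $[0,\infty)$ and $C^2$ off $b^*$; this gives the stated regularity of $\hat H$ (continuity on $[\phi(0),0]$, differentiability on $(\phi(0),0)$, twice differentiability off $\phi(b^*)$), and $\hat H(0)=\lim_{y\to\infty}(\hat h(y))^+/F(y)=0$ because $\hat h<0$ near $+\infty$ while $F(y)\to\infty$.

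Before verifying (i)--(iii) I would record a reduction (which also yields the Remark after Theorem \ref{thm:CIRth3}): the standing hypothesis \eqref{condJ} forces $y_b>0$. Since $b^*>c_s$, \eqref{eq:solvebCIR} gives $K_0=V(0)=1/F'(b^*)$, so \eqref{condJ} reads $c_bF'(b^*)<1$; evaluating $(\L-r)F=0$ at $y=0$ gives $F'(0)=r/(\mu\theta)$, and $F$ is strictly convex ($F''\ge0$ by convexity and cannot vanish on an interval, since $F$ restricted to such an interval would be affine and hence, by \eqref{ODE}, identically zero, contradicting $F>0$), so $F'(b^*)>F'(0)=r/(\mu\theta)$ and thus $c_br/(\mu\theta)<1$, i.e. $\mu\theta>rc_b$, i.e. $y_b>0$. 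Consequently $\phi(y_b)\in(\phi(0),0)$, and since $y_s-y_b=r(c_s+c_b)/(\mu+r)>0$ while $b^*>y_s$ (Theorem \ref{thm:optLiquCIR}), one has $0<y_b<y_s<b^*$; in particular $y_b$ lies strictly inside the region where $\hat h=K_0F-h_b$. For (i): $\hat h'(y)=K_0F'(y)-1<K_0F'(b^*)-1=0$ for $y\in[0,b^*)$ by strict convexity of $F$, so $\hat h$ is strictly decreasing on $[0,b^*]$ and equals the negative constant $-(c_s+c_b)$ beyond $b^*$; together with $\hat h(0)=V(0)-c_b>0$ this produces a unique zero $\bar d\in(0,b^*)$ with $\hat h>0$ on $[0,\bar d)$ and $\hat h<0$ on $(\bar d,\infty)$. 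Since $\hat H(z)=\hat h(y)/F(y)$ has the sign of $\hat h(y)$ and $\phi$ is increasing, this is exactly the sign pattern in (i).

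For (ii): on $(\phi(b^*),0)$, i.e. $y>b^*$, one has $\hat h'\equiv0$ and $\hat h\equiv-(c_s+c_b)$, so $\hat H'(z)=(c_s+c_b)F'(y)/W(y)>0$, giving strict monotonicity there; and as $z\downarrow\phi(0)$ we have $y\downarrow0$, where the numerator $\hat h'(y)F(y)-\hat h(y)F'(y)$ stays bounded while $W(y)\to+\infty$ by the Abel formula, so $\hat H'(z)\to0$. For (iii): on $(0,b^*)$, $(\L-r)\hat h=K_0(\L-r)F-(\L-r)h_b=-(\L-r)h_b$, which by \eqref{CIRLrbs} is $<0$ for $y<y_b$ and $>0$ for $y_b<y<b^*$; on $(b^*,\infty)$, $(\L-r)\hat h=r(c_s+c_b)>0$. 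Hence $(\L-r)\hat h<0$ on $(0,y_b)$ and $>0$ on $(y_b,\infty)$, so $\hat H''<0$ on $(\phi(0),\phi(y_b))$ and $\hat H''>0$ on $(\phi(y_b),\phi(b^*))\cup(\phi(b^*),0)$; since $\hat H$ is $C^1$ across $\phi(b^*)$, it is concave on $(\phi(0),\phi(y_b))$ and convex on $(\phi(y_b),0)$, as claimed.

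The transform identities and the piecewise bookkeeping are routine. I expect two places to need genuine care. The first is extracting $y_b>0$ from the seemingly unrelated hypothesis \eqref{condJ}: the implication rests on the reformulation $V(0)=1/F'(b^*)$, the boundary value $F'(0)=r/(\mu\theta)$, and strict convexity of $F$ near the origin, and without it the interval $(\phi(0),\phi(y_b))$ in (iii) would be empty. The second is the behavior as $y\to0^+$ needed for $\lim_{z\to\phi(0)}\hat H'(z)=0$: one must use the explicit Wronskian growth $W(y)\sim c_0 y^{-2\mu\theta/\sigma^2}\to+\infty$, and the argument has to be phrased so it covers uniformly both $\phi(0)$ finite (when $2\mu\theta<\sigma^2$) and $\phi(0)=-\infty$ (when $2\mu\theta\ge\sigma^2$).
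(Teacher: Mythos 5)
Your proposal is correct and follows essentially the same route as the paper's Appendix A.5: reduce each claim to the sign/monotonicity of $\hat h$ and of $(\L-r)\hat h$ via the $F$--$\phi$ transform, use \eqref{eq:solvebCIR} to get $\hat h' <0$ on $(0,b^*)$, derive $0<y_b<y_s<b^*$ from \eqref{condJ} together with $F'(0)=r/(\mu\theta)$, and obtain the limit $\hat H'(z)\to 0$ from the blow-up of $\phi'$ (equivalently, of the Wronskian) at $0$. The only differences are cosmetic (you phrase the $y_b>0$ step through $F'$ rather than $V'$, and you make the Abel/Wronskian computation explicit where the paper simply asserts $\phi'(y)\to+\infty$).
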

By Lemma \ref{lm:hatHCIR}, we sketch $\hat{H}$ in Figure \ref{fig:hatHCIR}.

\begin{figure}[h]
\begin{center}
 %\subfigure[\small{Sketches of $H$ and $W$. }]
 \subfigure[\small{$2\mu\theta < \sigma^2$ }]
 {\label{fig:hatHCIR_rf}
 \scalebox{0.3} {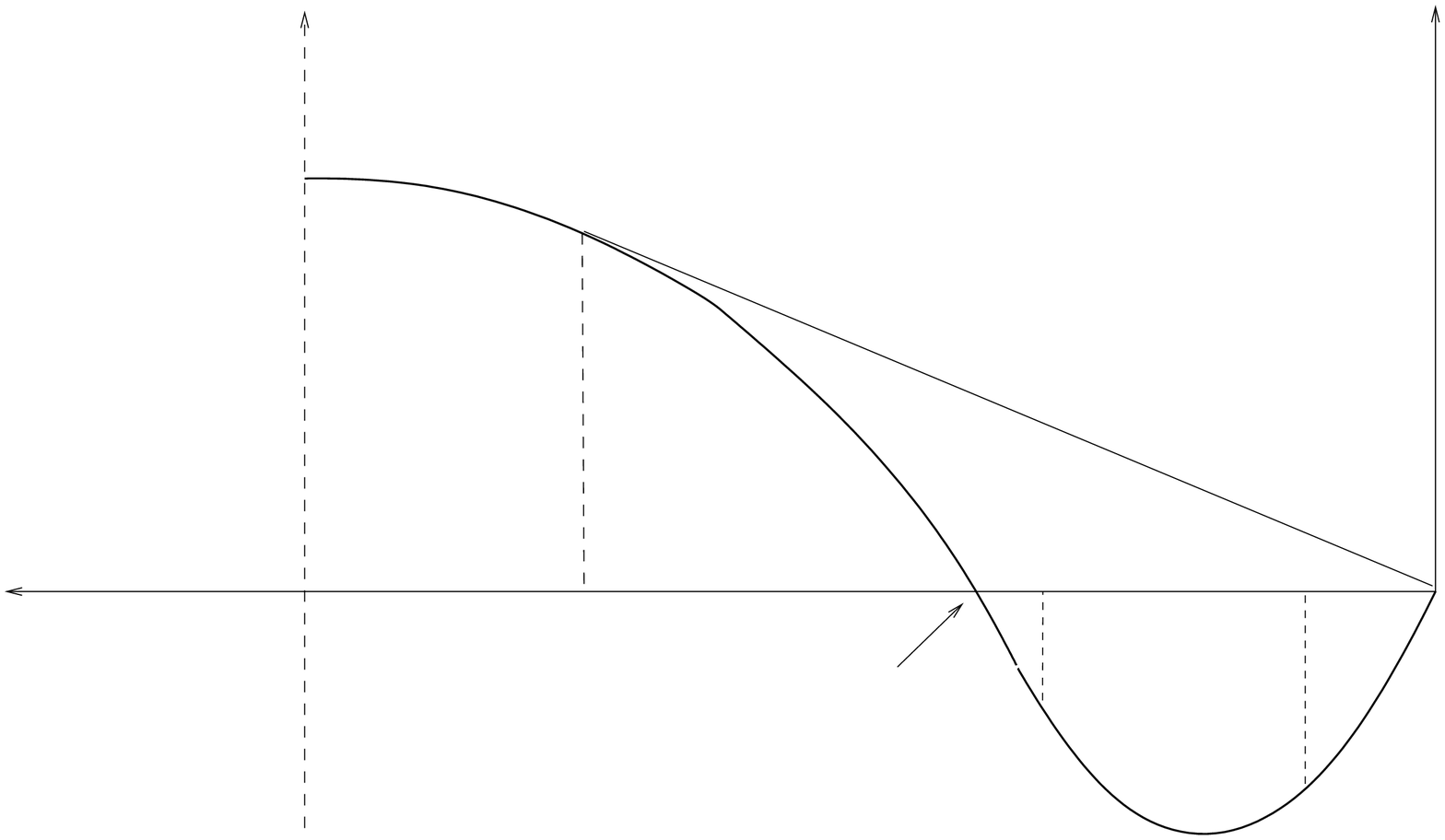}}
 %\subfigure[\small{Sketches of $\hat{H}$ and $\hat{W}$. }]
 \subfigure[\small{$2\mu\theta \geq\sigma^2$ }]
 {\label{fig:hatHCIR_en}\scalebox{0.3}{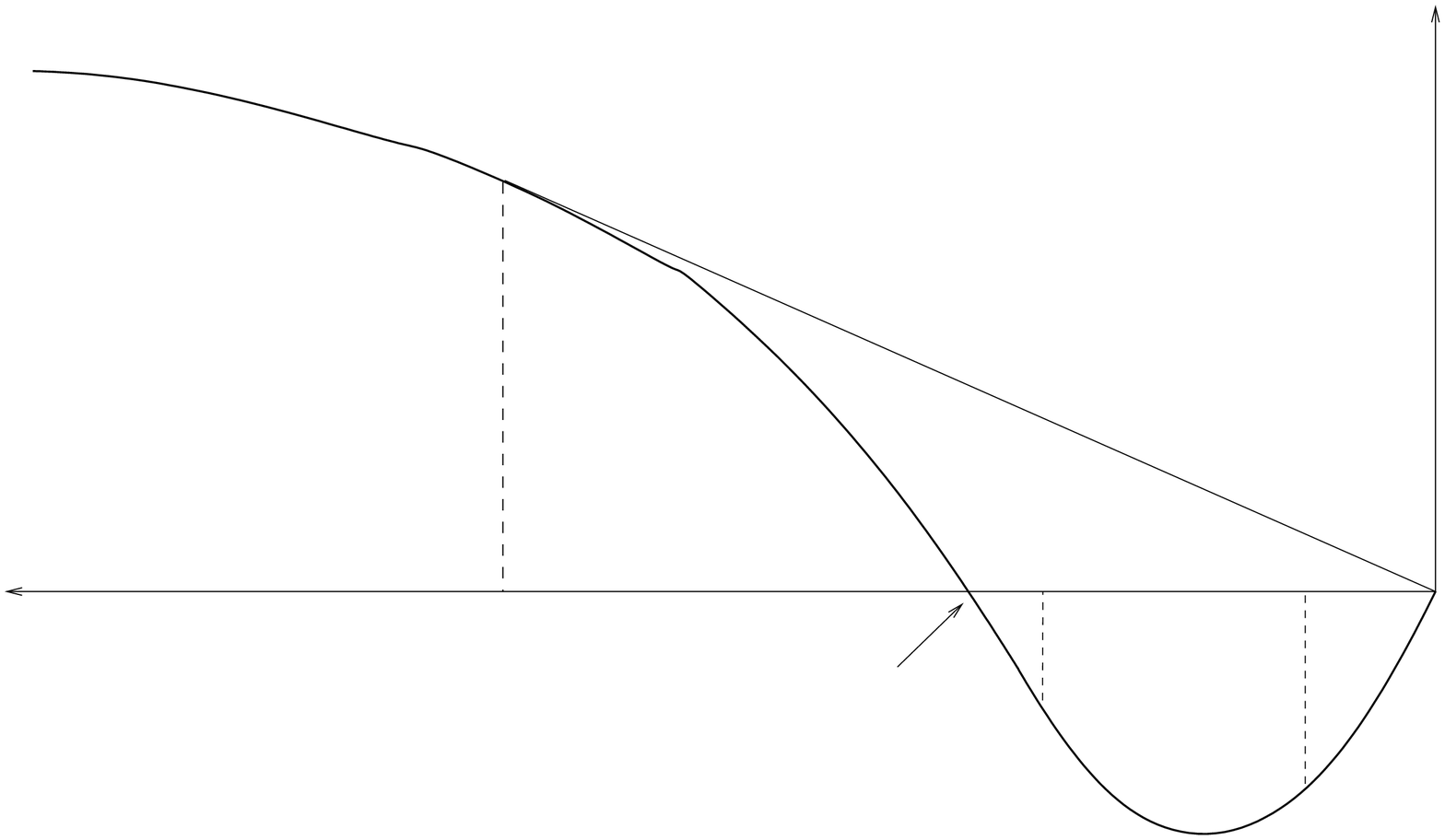}}
\end{center}
%\caption{\small{Sketches of $H$, $W$,$\hat{H}$ and $\hat{W}$.}}
\setlength{\abovecaptionskip}{-5pt}
\caption{\small{Sketches of $\hat{H}$ and $\hat{W}$. The function $\hat{W}$ coincides with $\hat{H}$ on $[\phi(0), \hat{z}]$ and is a straight line tangent to $\hat{H}$ at $\hat{z}$ on $(\hat{z},0]$. Note that $-\infty\!<\!\phi(0)\!<\!0$ if $2\mu\theta < \sigma^2$, and $\phi(0)\!=\!-\infty$ if $2\mu\theta \geq\sigma^2$.}}
\label{fig:hatHCIR}
\end{figure}

%\begin{figure}[h]
%\begin{center}
%% \subfigure[$\hat{H}$ function.]
%% {
%% \scalebox{0.25}
%% {\input{hatHCIR_1.pstex_t}}}
%% \subfigure[$\hat{W}$ function.]
% {\scalebox{0.35}{\input{hatHCIR_1_W.pstex_t}}}
%\end{center}
%\caption{}
%\label{fig:hatHCIR}
%\end{figure}

%With the properties of $\hat{H}$, we are ready to prove Theorem \ref{thm:optEntryCIR}.
\paragraph*{Proof of Theorem \ref{thm:optEntryCIR}}
%[Proof of Theorem \ref{thm:optEntryCIR}]
%The boundary $+\infty$ is natural, while $0$ is a reflecting boundary if $2\mu\theta < \sigma^2$ and an entrance-and-not-exist boundary if $2\mu\theta \geq \sigma^2$. The limit
%\begin{align}
%l_{+\infty} &: = \limsup_{x\to+\infty}\frac{(\hat{h}(x))^+}{\hat{F}(x)} =  0.
%\end{align}
To determine  the value function in the form: $J(y)  = F(y)\hat{W}(\phi(y))$, we analyze  the decreasing smallest concave majorant,  $\hat{W}$, of $\hat{H}$. By Lemma \ref{lm:hatHCIR} and Figure \ref{fig:HCIR},  we have $\hat{H}^{'}\!(z) \to 0$ as $z \to \phi(0)$. Therefore, there exists a unique number $\hat{z} \in (\phi(0), \phi(b^*))$ such that
%since $\hat{H}(y)$ converges to a finite limit $(V(0)-c_b)/F(0)$ as $y\to -\infty$,
\begin{align}\label{eq:HzEntryCIR}
\frac{\hat{H}(\hat{z})}{\hat{z}} = \hat{H}^{'}\!(\hat{z}).
\end{align}
In turn, the decreasing smallest concave majorant admits the form:
\begin{align}\label{hatWCIR}
\hat{W}(z) = \begin{cases}
\hat{H}(z) &\, \textrm{ if }\, z \leq \hat{z},\\
z\frac{\hat{H}(\hat{z})}{\hat{z}} &\, \textrm{ if }\, z > \hat{z}.
\end{cases}
\end{align}
Substituting $d^* = \phi^{-1}(\hat{z})$ into \eqref{eq:HzEntryCIR}, we have
\begin{align}\label{hatHzzCIR}
\frac{\hat{H}(\hat{z})}{\hat{z}} = \frac{\hat{H}(\phi(d^*)}{\phi(d^*)} = -\frac{V(d^*)-d^*-c_b}{G(d^*)},
\end{align}
and
\begin{align*}
\hat{H}^{'}\!(\hat{z}) = \frac{F(d^*)({V}^\prime\!(d^*)  - 1) - F^{'}\!(d^*)(V(d^*)-(d^*+c_b))}{F^{'}\!(d^*)G(d^*)-F(d^*)G^{'}\!(d^*)}.
\end{align*}
Equivalently, we can express condition \eqref{eq:HzEntryCIR} in terms of $d^*$:
\begin{align*}
-\frac{V(d^*)-(d^*+c_b)}{G(d^*)} = \frac{F(d^*)({V}^\prime\!(d^*)  - 1) - F^{'}\!(d^*)(V(d^*)-(d^*+c_b))}{F^{'}\!(d^*)G(d^*)-F(d^*)G^{'}\!(d^*)},
\end{align*}
%which can be further simplified to
%\begin{align*}
%G(d^*)(V^{'}(d^*)  - 1) = G^{'}(d^*)(V(d^*)-d^*-c_b),
%\end{align*}
%which is exactly equation \eqref{eq:solvedCIR}, and
which shows $d^*$ satisfies \eqref{eq:solvedCIR} after simplification.
%Further we can write
%\begin{align*}
%\frac{\hat{H}(\hat{z})}{\hat{z}} =  -\frac{V(d^*)-d^*-c_b}{G(d^*)} .
%\end{align*}

Applying \eqref{hatHzzCIR} to \eqref{hatWCIR}, we get
\begin{align*}
W(\phi(y)) = \begin{cases}
\hat{H}(\phi(y)) = \frac{V(y)-(y+c_b)}{F(y)} &\, \textrm{ if }\, y \in [0, d^*],\\
\phi(y) \frac{\hat{H}(\hat{z})}{\hat{z}} = \frac{V(d^*)-(d^*+c_b)}{G(d^*)}\frac{G(y)}{F(y)} &\, \textrm{ if }\, y \in  (d^*,+\infty).
\end{cases}
\end{align*}
From this, we obtain the value function.

\subsection{Optimal Switching Problem}

\paragraph*{Proofs of Theorems \ref{thm:CIRth2} and \ref{thm:CIRth3}}
\cite{zervos2011buy} have studied a similar problem of trading a mean-reverting asset with fixed transaction costs, and provided detailed proofs using a variational inequalities approach. In particular, we 
observe that $y_b$ and $y_s$ in \eqref{CIRLrbs}  play the  same roles as $x_b$ and $x_s$ in Assumption 4 in \cite{zervos2011buy}, respectively. However, Assumption 4 in \cite{zervos2011buy} requires  that $0 \leq x_b,$ this is not necessarily true for $y_b$ in our problem. We have checked and realized that this assumption is not necessary for Theorem \ref{thm:CIRth2}, and that $y_b < 0$ simply implies that there is no optimal starting level, i.e. it is never optimal to start.

In addition, \cite{zervos2011buy} assume (in their Assumption 1) that the hitting time of level $0$ is infinite with probability 1. In comparison, we consider not only the CIR case where 0 is inaccessible, but also when the CIR process has a reflecting boundary at $0$. In fact, we find that the proofs in \cite{zervos2011buy} apply to both cases under the CIR model.  Therefore, apart from relaxation of the aforementioned assumptions, the proofs of our Theorems \ref{thm:CIRth2} and \ref{thm:CIRth3} are the same as that of Lemmas 1 and 2 in \cite{zervos2011buy} respectively.

\appendix
\section{Appendix}
\noindent \textbf{A.4 ~Proof of Lemma \ref{lm:HCIR} (Properties of $H$).}\, \label{pf-CIR-H}
\noindent (i) First, we compute
\begin{align*}
H(0) = \lim_{y\to+\infty}\frac{(h_s(y))^+}{F(y)} =  \lim_{y\to+\infty}\frac{y-c_s}{F(x)} =\lim_{y\to+\infty}\frac{1}{{F}^\prime\!(y)} =0.
\end{align*}
Using the facts that  $F(y) >0$ and  $\phi(y)$ is a strictly increasing function, \eqref{eq:HCIR0} follows.

\noindent (ii) We look at the first derivative of $H$:
\begin{align*}
{H}^\prime\!(z) = \frac{1}{\phi'(y)} (\frac{h_s}{F})'(y) =  \frac{1}{\phi'(y)}\frac{F(y) - (y-c_s){F}^\prime\!(y)}{{F}^2(y)}, \quad z=\phi(y).
\end{align*}
Since both $\phi'(y)$ and ${F}^2(y)$ are positive,  it remains to determine the sign of $F(y) - (y-c_s){F}^\prime\!(y)$. Since ${F}^\prime\!(y) >0$, we can equivalently check  the sign of $v(y) :=\frac{F(y)}{{F}^\prime\!(y)} -(y-c_s)$. Note that $v'(y) = -\frac{F(y){F}^{\prime\prime}\!(y)}{({F}^\prime\!(y))^2} <0$. Therefore, $v(y)$ is a strictly decreasing function. Also, it is clear that $v(c_s) >0$ and $v(y_s)>0$. Consequently,  $v(y)>0$ if $y< (c_s \vee y_s)$ and hence, $H(z)$ is strictly increasing if $z \in (\phi(0),\phi(c_s)\vee\phi(y_s))$.

%To examine whether $H(y)$ is increasing or decreasing:
%\[ (\frac{h}{F})'(x) = \frac{F(x) - (x-c)F^{'}(x)}{F^{CIR^2}(x)}.\]
%Similar as the analysis under OU dynamics, we define $v(x) = x-c-\frac{F(x)}{F^{'}(x)}$ such that $v(x)+c$ is the intersecting point of the tangent of $F(x)$ function and $x$-axis. Thus, $v(x)$ is a strictly increasing function. Easy to check that $v(c) <0$ and $v(L^*)<0$. So $v(x)<0$ if $x< (c\vee L^*)$ and hence, ${H}^\prime\!(y)$ is increasing if $y \in (-\infty,\phi(c)\vee\phi(L^*))$.

\noindent (iii) By differentiation, we have
\begin{align*}
H^{''}\!(z) = \frac{2}{\sigma^2F(y)(\phi'(y))^2}(\L-r)h_s(y),\quad z=\phi(y).
\end{align*}
%By $(6.2)$ in \cite{dayanik2003optimal}, we have the inequality
%\[ H''(y)\cdot [(\L - r)h](x) \geq 0, \quad y=\Psi(x). \]
Since $\sigma^2, F(y)$ and $(\phi'(y))^2$ are all positive, the convexity/concavity  of  $H^{''}$ depends on  the sign of
\begin{align*}
(\L - r)h_s(y) = \mu(\theta - y) - r(y-c_s) = (\mu\theta+rc_s)-(\mu+r)y
&\begin{cases}
\geq 0 &\, \textrm{ if }\, y \in [0, y_s],\\
\leq 0 &\, \textrm{ if }\, y \in [y_s, +\infty),
\end{cases}
\end{align*}
which implies property (iii).
%Therefore,
%\begin{align*}
%H(y) \textrm{ is }
%\begin{cases}
%\textrm{convex} &\, \textrm{ if }\, y \in (\phi(0), \phi(L^*)],\\
%\textrm{concave} &\, \textrm{ if }\, y \in [\phi(L^*),0].
%\end{cases}
%\end{align*}
$\scriptstyle{\blacksquare}$\\

\noindent \textbf{A.5 ~Proof of Lemma \ref{lm:hatHCIR} (Properties of $\hat{H}$).}\, \label{pf-CIR-hatH}
It is straightforward to check that $V(y)$ is continuous and differentiable everywhere, and  twice differentiable everywhere except at $y=b^*$, and all these holds for $\hat{h}(y) = V(y) - (y + c_b)$. Both $F$ and $\phi$ are twice differentiable.  In turn, the continuity and differentiability of $\hat{H}$ on $(\phi(0),0)$ and twice differentiability of $\hat{H}$ on $(\phi(0),\phi(b^*)) \cup (\phi(b^*),0)$ follow.

To show the continuity of $\hat{H}$ at $0$, we note that \begin{align*}\hat{H}(0)&=\lim_{y\to+\infty}\frac{(\hat{h}(y))^+}{F(y)}= \lim_{y\to+\infty}\frac{0}{F(y)}=0, \quad \text{ and }\\
\lim_{z \rightarrow 0}\hat{H}(z) &= \lim_{y\to+\infty}\frac{\hat{h}}{F}(y)= \lim_{y\to+\infty}\frac{-(c_s+c_b)}{F(y)} =0.
\end{align*}
From this, we conclude that  $\hat{H}$ is also continuous at $0$.

\noindent (i) First, for $y\in [b^*,+\infty)$,  $\hat{h}(y)\equiv -(c_s+c_b)$. For $y\in (0, b^*)$, we compute
\begin{align*}
{V}^\prime\!(y) = \frac{b^*-c_s}{F(b^*)} {F}^\prime\!(y) = \frac{{F}^\prime\!(y)}{{F}^\prime\!(b^*)},\quad \textrm{by \eqref{eq:solvebCIR}}.
\end{align*}
Recall that ${F}^\prime\!(y)$ is a strictly increasing function  and  $\hat{h}(y) = V(y)-y-c_b$.  Differentiation yields
\begin{align*}
\hat{h}^{'}\!(y) = {V}^\prime\!(y)-1 = \frac{{F}^\prime\!(y)}{{F}^\prime\!(b^*)} - 1 < \frac{{F}^\prime\!(b^*)}{{F}^\prime\!(b^*)}-1=0, \quad y \in (0, b^*),
\end{align*}
which implies that $\hat{h}(y)$ is strictly decreasing for $y\in (0,b^*)$. On the other hand,
 $\hat{h}(0) >0$ as we are considering  the non-trivial case. Therefore, there exists a unique solution $\bar{d}<b^*$ to $\hat{h}(y) = 0$, such that $\hat{h}(y)>0$ for $y \in [0, \bar{d})$, and $\hat{h}(y)<0$ for $y \in (\bar{d},+\infty)$.
%\begin{align*}
%\hat{h}(x) \begin{cases}
%>0 &\, \textrm{ if }\, x \in [0, \bar{d}),\\
%<0 &\, \textrm{ if }\, x \in (\bar{d},+\infty).
%\end{cases}
%\end{align*}
With $\hat{H}(z) = (\hat{h}/F) \circ \phi^{-1}(z)$, the above properties of $\hat{h}$, along with the facts  that  $\phi(y)$ is strictly increasing and   $F(y) > 0$, imply  property (i).

\noindent (ii)With $z=\phi(y)$, for $y>b^*$, $\hat{H}(z)$ is strictly increasing since
\begin{align*}
\hat{H}^{'}\!(z) = \frac{1}{\phi'(y)} (\frac{\hat{h}}{F})'(y) =  \frac{1}{\phi'(y)}(\frac{-(c_s+c_b)}{F(y)})'=\frac{1}{\phi'(y)}\frac{(c_s+c_b)F^{'}\!(y)}{F^{\chi^2}(y)} > 0.
\end{align*}
When $y\to0$,  because $(\frac{\hat{h}(y)}{F(y)})'$ is finite, but $\phi'(y)\to+\infty$, we have $\lim_{z\to\phi(0)} \hat{H}^{'}\!(z) = 0$.

\noindent (iii) Consider the second derivative:
\begin{align*}\label{eq:hatHCIR2}
\hat{H}^{''}\!(z) = \frac{2}{\sigma^2F(y)(\phi'(y))^2}(\L-r)\hat{h}(y).
\end{align*}
The positivities of  $\sigma^2, F(y)$ and $(\phi'(y))^2$ suggest that we inspect the sign of $(\L - r)\hat{h}(y)$:
\begin{align*}
(\L - r)\hat{h}(y) &= \half \sigma^2x {V}^{\prime\prime}\!(y) + \mu(\theta -y) {V}^\prime\!(y) - \mu (\theta-y) - r(V(y) -(y +c_b))\\
& = \begin{cases}
(\mu+r)y-\mu\theta +rc_b &\, \textrm{ if }\, y < b^*,\\
r(c_s+c_b)>0 &\, \textrm{ if }\, y > b^*.
\end{cases}
\end{align*}
Since $\mu, r > 0$ by assumption, $(\L - r)\hat{h}(y)$ is strictly increasing  on $(0,b^*)$. Next, we show that $0\!<\!y_b \!<\! y_s \!<\! b^*$. By the fact that ${F}^\prime\!(0) = \frac{r}{\mu\theta}$ and the assumption that $V(0) = \frac{b^*-c_s}{F(b^*)} > c_b$, we have
\begin{align*}
{V}^\prime\!(0) =\frac{b^*-c_s}{F(b^*)} {F}^\prime\!(0)  =\frac{b^*-c_s}{F(b^*)}\frac{r}{\mu\theta} >  \frac{rc_b}{\mu\theta}.
\end{align*}
In addition, by the convexity of $V$ and ${V}^\prime\!(b^*)=1$, it follows that
\begin{align*}
\frac{rc_b}{\mu\theta} < {V}^\prime\!(0) < {V}^\prime\!(b^*)=1,
\end{align*}
which implies $\mu\theta > rc_b$ and hence $y_b>0$. By simply comparing the definitions of  $y_b$ and $y_s$, it is clear that $y_b < y_s$. Therefore, by observing that $(\L - r)\hat{h}(y_b)=0$, we conclude $(\L - r)\hat{h}(y) <0$ if $y \in [0, y_b)$, and $(\L - r)\hat{h}(y)>0$ if $y \in (y_b, +\infty)$. This suggests the concavity and convexity of $\hat{H}$ as desired.
$\scriptstyle{\blacksquare}$

\bibliographystyle{apa}
 \singlespacing
 
\bibliography{mybib_2012}

%To start, we consider  a price process $(X_{t})_{t\geq0}$   modeled by a regular time-homogeneous diffusion \begin{align}\label{dX}
%dX_{t}=  \mu(X_t)\,dt+\sigma(X_t) \,dB_{t},
%\end{align}
%with state space $\I \subseteq \R$. Here, $B$ is a standard Brownian motion defined on the probability space $(\Omega, \F, \P)$, with the historical probability measure $\P$. The deterministic coefficients $\mu(\cdot) \in \R$ and $\sigma(\cdot)^2 >0$  satisfy  the condition \begin{align*}
%\forall x\in \text{int}(\I), ~\exists\,\epsilon>0, \quad \text{ such that } \quad  \int_{(x-\epsilon,x+\epsilon)}\frac{1+|\mu(y)|}{\sigma (y)^2}\dx{y} < \infty.
%\end{align*}This  guarantees a weak solution to \eqref{dX}  (see Section 5.5 of \cite{Karatzas1991}). Let   $\mathbb{F}\equiv(\F_t)_{t\geq 0}$  denote the filtration  generated by $X$.   We can interpret  $X$   as the price evolution  of an asset, or more generally   the value of a position held by an investor.

%, and that $\P\{\tau_y<+\infty\,|\,X_0=x\}>0$ for any $x\in \text{int}(\I)$,  $y \in \I$, where $\tau_y:=\inf\{t\geq 0:X_t=y\}$ (see Section 2 of \cite{dayanik2003optimal})

\end{document}